\newcommand{\lyxaddress}[1]{
	\par {\raggedright #1
	\vspace{1.4em}
	\noindent\par}
}
\theoremstyle{plain}
\newtheorem{thm}{\protect\theoremname}
\theoremstyle{plain}
\newtheorem{cor}[thm]{\protect\corollaryname}
\theoremstyle{plain}
\newtheorem{lem}[thm]{\protect\lemmaname}
\providecommand{\corollaryname}{Corollary}
\providecommand{\lemmaname}{Lemma}
\providecommand{\theoremname}{Theorem}
\begin{document}
\title{Relational electromagnetism and the electromagnetic force}
\author{M.A. Natiello$^{\dagger}$and H.G. Solari$^{\ddagger}$}
\maketitle

\lyxaddress{$^{\dagger}$Centre for Mathematical Sciences, Lund University;}

\lyxaddress{$^{\ddagger}$Departamento de Física, FCEN-UBA and IFIBA-CONICET}
\begin{abstract}
The force exerted by an electromagnetic body on another body in relative
motion, and its minimal expression, the force on moving charges or
\emph{Lorentz' force} constitute the link between electromagnetism
and mechanics. Expressions for the force were produced first by Maxwell
and later by H. A. Lorentz, but their expressions disagree. The construction
process was the result, in both cases, of analogies rooted in the
idea of the ether. Yet, the expression of the force has remained despite
its production context. We present a path to the electromagnetic force
that starts from Ludwig Lorenz' relational electromagnetism which
was available at the time of Lorentz' work. The present mathematical
abduction does not rest on analogies. Following this path we show
that relational electromagnetism, as pursued by the Göttingen school,
is consistent with Maxwell's transformation laws and compatible with
the idea that the ``speed of light'' takes the same value in all
(inertial) frames of reference, while it cannot be conceived on the
basis of analogies with material motion.

\textbf{ORCID numbers:}M. A. Natiello: 0000-0002-9481-7454 H. G. Solari:
0000-0003-4287-1878
\end{abstract}

\section{Introduction and Historical Background}

By the mid XIX-th century two cultures, or rather two forms of conceiving
science, came into conflict around the issues posed by electromagnetic
phenomena. We will label them the Göttingen and Berlin schools. Franz
Neumann as well as his son Carl, Weber, Riemann and Gauss belong to
the Göttingen school and support action-at-a-distance. Hertz, Heaviside,
Clausius and others represent the Berlin school, who denies action-at-a-distance
and relies on the \emph{ether} as a form of electromagnetic medium
sustaining the electromagnetic interaction. Maxwell has an intermediate
position. He acknowledges the achievements of the Göttingen school,
while he cannot conceive electromagnetic interactions without a medium
supporting the interaction while ``travelling'' from source to detector.
Indeed, he ends the Treatise by commenting on the Göttingen school,
finally stating \citep[{[866],}][]{maxw73} (our highlight):
\begin{quote}
But in all of these theories the question naturally occurs: If something
is transmitted from one particle to another at a distance, what is
its condition after it has left the one particle and before it has
reached the other? If this something is the potential energy of the
two particles, as in Neumann's theory, how are we to conceive this
energy as existing in a point of space, coinciding neither with the
one particle nor with the other? In fact, whenever energy is transmitted
from one body to another in time, there must be a medium or substance
in which the energy exists after it leaves one body and before it
reaches the other, for energy, as Torricelli remarked, 'is a quintessence
of so subtile a nature that it cannot be contained in any vessel except
the inmost substance of material things.' \textbf{Hence all these
theories lead to the conception of a medium in which the propagation
takes place}, and if we admit this medium as an hypothesis, I think
it ought to occupy a prominent place in our investigations, and that
we ought to endeavour to construct a mental representation of all
the details of its action, and this has been my constant aim in this
treatise.
\end{quote}
The paragraph shows Maxwell's urge to represent interactions in terms
of analogies with matter. Since bodies have a place in (subjective)
space, the paragraph assumes, without saying it, that we should consider
interactions (an abstract concept) in terms of the more accessible
(intuitive) idea of bodies. Because bodies, not interactions, have
a place in space.

On the other hand, the Berlin school resents of Maxwell's construction
as being ``impure'' noticing that his formulae are not always the
result of elaborations in terms of the ether. The Berlin school linked
understanding with images, mental representations of the observed,
the \emph{bild} theory \citep{dago04} after Helmholtz and Hertz,
whereas the Göttingen school aimed to provide a mathematical organisation
to electromagnetic knowledge where equations and their abduction/conceptualisation/construction
constitute an unity. \footnote{It is worth to keep in mind that non-scientific, irrational, factors
may have played a role as well since by 1868 Göttingen (Hannover)
was annexed to Prussia (hosting the school of Berlin). Clausius, a
Prussian nationalist and scientist, promoted at the same time an attack
on the scientific position on electromagnetism held by the Göttingen
school, from the point of view of the ether and of Berlin's epistemic
approach (See \citep{clau69} and \citep{arch86,NatielloManuscript-NATTCO-21}).} Another remarkable characteristic of the ether culture was the rejection
of the mathematical structures of mechanics (in particular action-at-a-distance).
In his posthumous work, Hertz attempted to reformulate Mechanics \citep{hertz89}.
He begins the preface by fostering doubts on Newton's mechanics in
relation to the ether:
\begin{quote}
All physicists agree that the problem of physics consists in tracing
the phenomena of nature back to the simple laws of mechanics. But
there is not the same agreement as to what these simple laws are {[}...{]}
But we have here no certainty as to what is simple and permissible,
and what is not: it is just here that we no longer find any general
agreement {[}...{]} So, for example, it is premature to attempt to
base the equations of motion of the ether upon the laws of mechanics
until we have obtained a perfect agreement as to what is understood
by this name.
\end{quote}
Lorentz' place in the development of electromagnetism is at a turning
point of it. Lorentz was fluent in both epistemic traditions and his
re-elaboration of Maxwell tries to keep a balance on the different
views, while committed to the ether interpretation. However, these
two views cannot coexist and Lorentz' approach turned to be incompatible
with both.

The history of what we know today as Lorentz' force has been presented
in \citep[Appendix A,][]{assi94}. It appears in Maxwell's treatise
as:
\[
F=qE+J\times B
\]
where $J=j+\epsilon_{0}\frac{\partial E}{\partial t}$ (written in
modern notation). The term associated to $j$ was further decomposed
in contributions of galvanic currents and that of charges in relative
motion and it was inferred after considerations of energy associate
to the interaction of two circuits. The term $\epsilon_{0}\frac{\partial E}{\partial t}\times B$
was added by analogy under the idea that the displacement current
was a current associated to the ether. Thompson \citep{thomson1881}
obtained the force following Maxwell's method and some special considerations,
Heaviside \citep{heaviside1889} later obtained the expression introducing
some changes in the argumentation but basically along the same lines.

Lorentz deduction of the electromagnetic force on a charged particle
from Maxwell's background is presented in \citep[§74-§80,][]{lorentz1892CorpsMouvants}.
Lorentz attempts to reintroduce mathematical structures but his argumentation
is marred by the use of steps founded only in the idea of the ether.
Lorentz assumed a simultaneous displacement of a body with respect
to the ether and to the complement of electromagnetic bodies. As soon
as the argumentation rests on the ether, it falls apart when the ether
is suppressed, unless there is an alternative for producing the result
resting on the complement of electromagnetic bodies.

The positive side of both Maxwell's and Lorentz' attempts is that
they derive the mechanical force due to electromagnetic phenomena
as a basic part of the electromagnetic theory. The theory in this
way is connected (unified) with mechanics from within itself. Indeed,
Lorentz writes down the five ``fundamental equations'' of electrodynamics
(Maxwell's four and the mechanical force) \citep{lorentz1899simplified}.
The downside is that both Maxwell (through the displacement current
and the so-called Ampère-Maxwell law) and Lorentz include the ether
in their derivations. The ether has subsequently been banned from
electrodynamics, leaving us with the uncomfortable situation of having
a theory rejecting the ether while its fundamental pillars at the
same time rest on the ether.

In the classical textbooks of the XXth. century \citep{panofsky1955classical,jackson1962classical,feynman1965,purc65},
the Lorentz force is presented as complementing Maxwell's equations
and its ultimate support is left to its success --after using it
intuitively-- as well as to its consistence with Lorentz transformations.
There is no derivation of the force from observations or first principles
in the cited books. Indeed, some of the mentioned books present the
Lorentz force as an independent posit, others refer to the original
work of Lorentz, and finally other books \citep{sope75classical,rohr07classical,zang12modern}
``reverse-engineer'' a lagrangian, formally deriving the force from
that lagrangian. It must be observed that the ad-hoc lagrangian is
not the one in Lorentz' while the mathematical operations assume independence
of currents and fields, only to arrive to the conclusion that they
are not independent, this is, showing that the approach is inconsistent.
From a historical point of view, Faraday writes, in the manuscript
entitled “Thoughts on ray vibrations“ \citep[p. 447]{fara55}:
\begin{quote}
The point intended to be set forth for consideration of the hearers
was, whether it was not possible that the vibrations that in certain
theory are assumed to account for radiation and radiant phænomena
cannot occur in the lines of force which connect particles, and consequently
masses of matter together; a motion that, as far as it is admitted,
will dispense with the æther, which in another view is supposed to
be the medium in which this vibrations take place.

You are aware of the speculation which I some times since uttered
respecting that view of the nature of matter that considers its ultimate
atoms as centres of force, and not as so many little bodies surrounded
by forces, the bodies being considered in the abstract as independent
of the forces and capable of existing without them. In the later view,
this little atoms have a definite form and a certain limited size;
in the former view such is not the case, for that which represents
size may be considered as extending to any distance to which the lines
of force of the particle extends: the particle indeed is supposed
to exist only by these forces, and where they are it is.
\end{quote}
This influential work is cited by Maxwell \citep{maxw64} in his ground-breaking
work. The proper conception of Maxwell's electromagnetism rests upon
the idea that force-matter or field-matter constitute dualities, hence
it is not possible to conceive a change in matter (or its state of
motion) and a change of its action (real or potential) upon other
bodies independently. Maxwell's equations have this duality in their
foundation.

The main goal of the present work is to obtain the expression of an
electromagnetic force without any reference to the ether, using the
theoretical tools available by 1873 (Maxwell's Treatise). Our claim
is that at that historical point it was possible to formulate the
electromagnetic force on charged particles (and Maxwell equations)
completely ether-free, resorting to research that was known around
1892.

We will show how the line of thought that Lorentz followed can be
implemented taking inspiration in Maxwell's approach without resorting
to the ether, thus moving away from absolute space and back to a relational
space. In so doing, we need to begin by reconsidering Maxwell's electromagnetism
from the standpoint of Ludwig Lorenz, who was the first in deducing
from experiments the wave equations for light as a transversal wave
phenomena \citep{lore61}. After removing the ether, Lorenz argument
\citep{lore67} must be acknowledged as the first \footnote{In his Treatise, Maxwell reclaims the primacy in time of his work
over Lorenz' while he does not contend Lorenz' rational argument \citep[{[805],}][]{maxw73}.} correct conception of electromagnetism.

After his experimental observations \citep{lore61}, Ludwig Lorenz
proposed that light is a transverse vibration phenomenon responding
to the wave equation. The idea of electromagnetic waves was circulating
at least since 1857 after work by Kirchhoff \citet{kirchhoff1857liv}
and Weber \citep{poge57,webe64}. Their waves, however, were present
inside material media (e.g., in wires). Lorenz' work \citep{lore67}
combines elements from both Weber and Kirchhoff, regarding light as
the perception of an electromagnetic oscillation outside matter. Lorenz
considered that Weber's equations had experimental support only in
experiments involving slow motion and could be regarded as a first
order approximation to wave propagation\footnote{Notice that by those times Weber's theory was able to explain all
electromagnetic phenomena experimentally known, as Maxwell recognised
in his 1865 foundational work.}.

With a philosophical position well aligned with that of Faraday who
wrote
\begin{quote}
But it is always safe and philosophic to distinguish, as much as is
in our power, fact from theory; the experience of past ages is sufficient
to show us the wisdom of such a course; and considering the constant
tendency of the mind to rest on an assumption, and, when it answers
every present purpose, to forget that it is an assumption, we ought
to remember that it, in such cases, becomes a prejudice, and inevitably
interferes, more or less, with a clear-sighted judgement. \citep[p.285]{fara44}
\end{quote}
Lorenz self-restrained from formulating physical hypothesis and considered
the matter (perhaps in more practical terms)
\begin{quote}
Hence it would probably be best to admit that in the present state
of science we can form no conception of physical reason of forces
and of their working in the interior of bodies; and therefore (at
present, at all events) we must choose another way, free from all
physical hypothesis, in order, if possible, to develop theory step
by step in such a manner that further progress of a future time will
not nullify the results obtained.
\end{quote}
The duality between matter and action, central to the thought of Faraday
in his vibrating rays theory \citep[p. 447]{fara55}, was put in more
mathematical terms by Lorenz: light corresponded with the electrical
activity in matter and its propagation at distance was subject to
matching conditions with the electrical activity inside matter. This
is the meaning of the ``identity of the vibrations of light with
electrical currents'' announced in the title of his contribution.

We will show in the next Section that Lorenz' approach provides an
alternative to the standard argument introducing Maxwell's displacement
current in the Ampère-Maxwell law (see eq. \ref{eq:displacement}
below).

We will attempt a reconstruction of electromagnetism starting from
Lorenz standpoint, recovering the energy formula by Maxwell and further
discussing Lorentz Lagrangian as an alternative starting point to
Lorenz. Next, we will address the Lorentz force in this setting showing
how it belongs to it. We will further show that, consistent with this
deduction, there is an alternative form for the force. The transformation
from one form to the other was envisioned by Maxwell \citep[{[600-601],}][]{maxw73}.
We give further evidence that the current view of electromagnetism
is more the consequence of the epistemological shift towards pragmatism
than to experimental facts. The actual use of electromagnetism and
its confrontation with experiments corresponds to the relational view,
while the constructive inconsistencies of the Lorentz force correspond
to its relation to the ether as an absolute reference.

\section{The Lorentz Force}

What came to be known as the Lorentz force is the outcome of the derivation
in \citep[§74-§80,][]{lorentz1892CorpsMouvants} for the force exerted
on a distinguished body (the probe) by the ether. The action of the
ether is assumed to be conveyed by the fields $E$ and $B$, which
satisfy Maxwell equations according to Heaviside, i.e., regarded from
a reference system at rest with the ether. Lorentz takes the expressions
for kinetic and potential electromagnetic energies from Maxwell, combining
them in Lagrangian form to obtain (in §77) $\nabla\times E=-\partial B/\partial t$,
the ``fourth'' Maxwell equation (not explicitly stated by Maxwell).
Subsequently, in §80 the Lorentz force is extracted from the Lagrangian,
for a probe taken to be a rigid solid, consisting of particles with
some charge density (nonzero only at the location of each particle)
that adds up to a smooth charge density $\rho$. The procedure in
both §77 and §80 is not purely mathematical but it responds to the
assumption on the ether, to which fields and velocities refer. Three
main differences with Maxwell are that (a) Maxwell never wrote down
a Lagrangian but worked with the energy contributions instead, (b)
Maxwell considers virtual displacements of the secondary circuit while
Lorentz considers that a virtual displacement corresponds at the same
time (and the same amount) to a displacement relative to the electromagnetic
bodies producing the fields and with respect to the ether (in §76).
One can hardly imagine something different than the ether being the
reference space associated to the sources of the field. Finally, (c),
the velocity of the charge in motion is for Lorentz the velocity with
respect to the ether, while Maxwell's velocity in {[}598{]} could
be interpreted either as relative to the primary circuit (the source
of electromagnetic fields) --as in Faraday's original concept--
or relative to the ether. The current use of the force adheres to
Faraday's view, thus becoming inconsistent with the derivation, inasmuch
the latter rests on the ether.

Before we proceed to a derivation of the force, it is worth to consider
the difference between Lorentz' and Maxwell's expressions for the
force, since Lorentz appears to be following Maxwell steps up to some
point. Maxwell restrained from making an early decision on the nature
of electricity. He considered a piece of matter carrying a current
and included the contribution of the ``displacement current'' just
by analogy with galvanic currents. The resulting force was determined
up to a contribution consisting of the gradient of a potential. Lorentz,
in turn, adopted the idea of material corpuscles (Weber's hypothesis
\citep{webe46}). He considered that no galvanic current was present
in the moving body and introduced the \emph{Lorentz current }which
is no more than the description of a moving (charged) body with respect
to the reference frame of the ether, made in classical terms. For
both of them, Maxwell and Lorentz, the body was a rigid solid and
the force was to be determined by a variational method. Lorentz acknowledges
the influence of the Göttingen school, when he claims: ``The influence
that was suffered by a particle B due to the vicinity of a second
one A, indeed depends on the motion of the latter, but not on its
instantaneous motion. Much more relevant is the motion of A some time
earlier, and the adopted law corresponds to the requirement for the
theory of electrodynamics, that was presented by Gauss in 1845 in
his known letter to Weber \citep[bd.5 p. 627-629,][]{gaus70}''.
Nevertheless, the extent of this influence is difficult to gauge,
since his derivation of the electromagnetic force rests on different
premises. Maxwell, on the other hand, acknowledges that Ludwig Lorenz'
approach leads to the same electromagnetic equations but he does not
explore contact points and differences any further.

\subsection{A word about the ether and absolute space.}

Absolute space had earned a bad reputation by the beginning of the
XIX century, only the relational space appeared to matter for scientists.
The difficulties of conceiving the wave propagation of electromagnetic
phenomena as well as the philosophical belief that ``matter acts
where it is'' \footnote{Reverted brilliantly by Faraday into \emph{matter is where it acts},
indicating that matter can only be inferred but not sensed, thus matter
is a belief while action is real.} emerged as the concept of an electromagnetic ether (having different
and sometimes contradictory properties for different authors). Lorentz
shortly addressed this problem:
\begin{quote}
That we cannot speak about an \emph{absolute} rest of the aether,
is self-evident; this expression would not even make sense ... When
I say for the sake of brevity, that the aether would be at rest, then
this only means that one part of this medium does not move against
the other one and that all perceptible motions are relative motions
of the celestial bodies in relation to the aether. \citep[p. 1,][]{lorentz1895attempt}
\end{quote}
For Lorentz, the ether had to be material but transparent to the (ponderable)
bodies. He did not fully convince other authors such as \citep{einstein1907relativity,ritz1908recherches}
for whom Lorentz' ether was just absolute space.

\subsection{Ether-free electrodynamics}

\subsubsection{Maxwell's equations}

The starting point of our task are Maxwell equations \citep{maxw73}
that in modern notation, with $C^{2}=\left(\mu_{0}\epsilon_{0}\right)^{-1}$,
can be stated as
\begin{eqnarray}
B & = & \nabla\times A\label{eq:magneticfield}\\
E & = & -\frac{\partial A}{\partial t}-\nabla V\label{eq:electricfield}\\
\epsilon_{0}\nabla\cdot E & = & \rho\label{eq:charges}\\
\mu_{0}j+\frac{1}{C^{2}}\frac{\partial E}{\partial t} & = & \nabla\times B\label{eq:displacement}
\end{eqnarray}
All quantities are evaluated at a reference point, $x$, relative
to the reference frame in which the distribution of charge is given
by $\rho(x,t)$ and satisfies the continuity equation ${\displaystyle \frac{\partial\rho(x,t)}{\partial t}}+\nabla\cdot j(x,t)=0$
\citep[{[295],}][]{maxw73}. In other words, it is the frame where
the fields $E,B$ have been determined through $\rho$ and $j$. The
equations correspond to the electromagnetic momentum, $A,$ (today
called vector potential), the magnetic induction, $B$ \citep[{[619] eq. A,}][]{maxw73},
the electric field intensity $E$ \citep[{[619], eq. B,}][]{maxw73}
(being the sum of an induction contribution \citep[{[598],}][]{maxw73}
and a gradient generalising the electrostatic potential) and the galvanic
current, $j$. The third equation, taken from \citep[{[612],}][]{maxw73}
and inspired in Coulomb and Faraday, generalises here Poisson's equation
from electrostatics, while the fourth equation \citep[{[619], eq. E,}][]{maxw73}
refers to \citep[{[610], eq. H*,}][]{maxw73} \footnote{The same equations can be found in \citep{maxw64} Section III under
the equation labels: B, 35, G, C (taking into account A) and H.}.

Maxwell obtained eq. \ref{eq:displacement} through considerations
about the ether. He argued that the propagation of electromagnetic
waves asks for a propagation medium, by analogy with ``the flight
of a material substance through space'' \citep[{[866]}][]{maxw73}.
However, Ludwig Lorenz \citep{lore67} expresses discomfort with the
ether hypothesis, which had only been useful to ``furnish a basis
for our imagination'' (p. 287). Consequently, he offered an ether-free
derivation of the field equations that took into account Gauss' suggestion
of delayed action at a distance \citep[bd.5 p. 627-629,][]{gaus70}
\footnote{Other authors arrived to similar equations, also without involving
the ether \citep{riem67,bett67,neum68}. These authors (but not Lorenz)
were in turn criticised for not involving the ether in their conceptualisation
\citep{clau69}.}. For static situations, the electromagnetic potentials at a point
$x$ (and time $t$) originating in a charge and current distribution
over a domain in space described through the coordinate $y$ are defined
through Poissons's equation starting from charges and currents, i.e.,
\begin{equation}
(A,\frac{V}{C})(x,t)=\frac{\mu_{0}}{4\pi}\int\left(\frac{(j,\rho C)(y,t)}{|x-y|}\right)d^{3}y.\label{eq:Kernel}
\end{equation}
For the vector potential $A$ this idea was originally advanced by
Franz Neumann \citep{neum1846-induction}.

Inspired in Gauss' proposal, Lorenz considered that the charges and
currents contributing to the potentials at time $t$ act with delay,
after being originated in a previous time $s=t-{\displaystyle \frac{|x-y|}{C}}$:

\begin{equation}
(A,\frac{V}{C})(x,t)=\frac{\mu_{0}}{4\pi}\int\left(\frac{(j,\rho C)(y,t-\frac{1}{C}|x-y|)}{|x-y|}\right)d^{3}y,\label{eq:Lorenz}
\end{equation}
where only galvanic currents and actual charges are involved. After
standard operations of vector calculus (see Lemma \ref{lem:wave},
Appendix \ref{sec:Proofs}) it is found that 
\begin{eqnarray}
\Box A & = & -\mu_{0}j\label{eq:propagation}\\
\Box V & = & -\frac{1}{\epsilon_{0}}\rho\nonumber 
\end{eqnarray}
where $\Box={\displaystyle \Delta-\frac{\partial^{2}}{\partial t^{2}}}$
is D'Alembert's wave operator. In Lemma \ref{lem:eq4-no-ether}, Appendix
\ref{sec:Proofs} we derive eq. \ref{eq:displacement} from Lorenz'
setup, taking advantage of the \emph{Lorenz gauge}. We note on passing
that eq. \ref{eq:Kernel} is not the only possible definition of $A$
and $V$ that is compatible with the wave equation.

\subsubsection{Charge and current distribution of electrified bodies in motion in
a relational framework}

Following Lorentz, charges are assumed to be rigid bodies \citep[§75c,][]{lorentz1892CorpsMouvants}
or ``corpuscles'' and all charges are considered to be accounted
for directly, so there is no need to introduce polarisation/induction
fields. With this setup, the force on a probe charge will be identical
to the electromotive force.

While the idea that a charge in movement corresponds to a current
goes back (at least) to Weber \citep{webe46}, the form of constructing
the current had been the subject of controversy. For Maxwell the velocities
that matter were relative velocities between primary and secondary
circuits in full agreement with Faraday \citet[{[568-583]}][]{maxw73}.
For the sake of the argument we consider both circuits as rigid solids.
Lorentz in turn considered that the velocities had to be considered
relative to the ether, an idea that has to be disregarded since no
evidence of the ether can be found. Hence, only the Faraday-Weber-Maxwell
(and others) idea of \emph{relative velocity} appears to be tenable.
We let $\dot{\bar{x}}(t)$ be the relative velocity between two reference
points in the primary and secondary bodies. We hereby define current
according to

\begin{eqnarray}
(x,t) & = & (y+\bar{x}(t),t)\nonumber \\
\bar{\rho}_{2}(x,t) & = & \rho_{2}(x-\bar{x}(t),t)\nonumber \\
\bar{j}_{2}(x,t) & = & j_{2}(x-\bar{x}(t),t)+\dot{\bar{x}}\rho_{2}(x-\bar{x}(t),t)\label{eq:Current}
\end{eqnarray}
where $y$ is a ``local'' coordinate of the secondary body (labelled
with the index $2$, the body where we aim to compute the acting force)
and $x$ is the coordinate relative to the reference frame above in
which the potentials are $A$ and $V$. Hence, neither ether nor any
inertial frame is considered, but only the relative coordinate between
the electromagnetic bodies undergoing mutual interaction. The charge
density and internal current $(\rho_{2}(y,t),j_{2}(y,t))$ are described
in a frame at rest with respect to the rigid body $2$, while $(\bar{\rho}_{2}(x,t),\bar{j}_{2}(x,t))$
are their expressions with reference to the primary circuit. Also,
it is assumed that the body does not rotate. In either reference frame,
charge and current for body $2$ also satisfy the continuity equation.

This definition of current (eq. \ref{eq:Current}) reduces to Lorentz'
current for the particular case $j_{2}=0$, only that the velocity
is now relational, while the first term in eq. \ref{eq:Current} allows
for other sources of current, as it was entertained by Faraday and
Maxwell.

\subsubsection{Maxwell's energy and Lorentz' Lagrangian revisited}

Maxwell's energy is introduced through a process in which matter acquires
its electromagnetic state \citep{maxw73}. The electrostatic energy
is obtained in {[}630-631{]} bringing charges from infinity. Next,
the magnetostatic energy is obtained in similar form in {[}632-633{]},
based upon magnetostatic results previously obtained in {[}389{]}.
Maxwell proceeds to add an electrokinetic energy due to the currents
{[}634-635{]}. In the even numbered articles he presents the physical
idea and in the odd numbered articles he transforms the expression
using integration by parts.

Thus, the energy required to create a given electromagnetic state
(a distribution of charges and currents) can be regarded as the time-integral
of the power, first bringing charges from a condition of zero energy
(from ``infinity'') working against $-\nabla V$ and bringing also
the current distribution, now working against the electromagnetic
momentum $A$: 
\[
{\cal P}=\left(\nabla V+\frac{\partial A}{\partial t}\right)\cdot j=-E\cdot j
\]
the time-integral from a situation in which ${\cal E}(0)=0$, leads
to 
\begin{equation}
{\cal E}=-\int_{0}^{t}dt\int d^{3}x\left(E\cdot j\right)=\int_{0}^{t}dt\int d^{3}x\left(\left(\nabla V+\frac{\partial A}{\partial t}\right)\cdot j\right).\label{eq:power}
\end{equation}
Assuming that all of $|B|^{2},|E|^{2},A\cdot j,V\rho$ decrease faster
than $\frac{1}{r^{2}}$ at infinity (a hypothesis needed for most
manipulations performed by Maxwell and Lorentz) and applying Gauss
theorem to convert volume integrals of a divergence into surface integrals
(vanishing at infinity by the assumption), a straightforward computation
(Lemma \ref{lem:energy-final}, Appendix \ref{sec:Proofs}) shows
that the energy provided by the electrification of the body is
\begin{eqnarray}
{\cal E} & = & \frac{1}{2}\int d^{3}x\left(\frac{1}{\mu_{0}}|B|^{2}+\epsilon_{0}|E|^{2}\right).\label{eq:energy}
\end{eqnarray}
From the electrostatic and magetostatic situations it is clear that
the individual terms correspond to the electrokinetic, $T=\frac{1}{2}\int d^{3}x\left(\frac{1}{\mu_{0}}|B|^{2}\right)$,
and potential energies, $U=\frac{1}{2}\int d^{3}x\left(\epsilon_{0}|E|^{2}\right)$,
thus suggesting that the electromagnetic Lagrangian reads ${\cal L}=T-U$
and the action ${\cal A}=\int dt\,{\cal L}$. Another straightforward
computation and application of Gauss theorem (Lemma \ref{lem:action-final},
Appendix \ref{sec:Proofs}) leads to 
\begin{eqnarray}
{\cal {\cal A}} & = & \frac{1}{2}\int dt\,\int\left(\frac{1}{\mu_{0}}|B|^{2}-\epsilon_{0}|E|^{2}\right)\,d^{3}x\nonumber \\
 & = & \frac{1}{2}\int dt\,\int\left(A\cdot j-\rho V\right)\,d^{3}x+f(t)-f(t_{0})\label{eq:action}
\end{eqnarray}

The dynamic equations are independent of $f$, so we can proceed from
here by taking $f\equiv0$. While the intuition of terms can be taken
as a suggestion, we must ask: What kind of dynamical situation is
reflected by this action?
\begin{thm}
Let $(\mathsf{A},\mathsf{V})$ be the known values of the electromagnetic
potentials in a piece of matter supported on a region of space with
characteristic function $\chi$. Then, under the assumptions \footnote{The continuity equation, eq.(\ref{eq:Lorenz}) and ${\displaystyle \nabla\cdot A+{\displaystyle \frac{1}{C^{2}}\frac{\partial V}{\partial t}}=0}$,
see Appendix.} of Lemma \ref{lem:GaugeLorenz}, Hamilton's principle of least action
\citep[Ch 3, 13 A p. 59, ][]{arno89}, $\delta{\cal A}=0$, subject
to the constraints given by $(\mathsf{A},\mathsf{V})$ implies that
the manifestation of the potentials outside matter obeys the wave
equation.
\end{thm}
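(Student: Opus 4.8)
The plan is to read the action \eqref{eq:action} in its field form, $\mathcal{A}[A,V]=\tfrac{1}{2}\int dt\int\bigl(\tfrac{1}{\mu_{0}}|\nabla\times A|^{2}-\epsilon_{0}|\partial_{t}A+\nabla V|^{2}\bigr)\,d^{3}x$, as a functional of the pair $(A,V)$, treating its identity with $\tfrac12\int A\cdot j-\tfrac12\int\rho V$ as an on-shell consequence rather than as the primitive object. Hamilton's principle is then imposed over the class of admissible variations $(\delta A,\delta V)$ that (i) vanish, together with the configuration, on $\mathrm{supp}\,\chi$ --- this being exactly the content of the constraint ``$(A,V)=(\mathsf{A},\mathsf{V})$ in matter'', the interior data being those furnished by the retarded integral \eqref{eq:Lorenz}; (ii) vanish at the two temporal endpoints; and (iii) decay at spatial infinity, consistently with the hypotheses recorded just before \eqref{eq:energy}. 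The aim is to show that stationarity against this class forces the source-free Maxwell system on $\{\chi=0\}$, which the Lorenz gauge supplied by Lemma \ref{lem:GaugeLorenz} then turns into the homogeneous wave equation.

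The first step is the first variation. Using $\delta(\nabla\times A)=\nabla\times\delta A$ and $\delta(\partial_{t}A+\nabla V)=\partial_{t}\delta A+\nabla\delta V$, integrating the curl term by parts in space via $\int(\nabla\times A)\cdot(\nabla\times\delta A)=\int(\nabla\times\nabla\times A)\cdot\delta A+\oint(\delta A\times B)\cdot n$, integrating the $\partial_{t}\delta A$ term by parts in time, and recalling $B=\nabla\times A$, $E=-\partial_{t}A-\nabla V$, one reaches
\[
\delta\mathcal{A}=\int dt\int\Bigl(\bigl(\tfrac{1}{\mu_{0}}\nabla\times B-\epsilon_{0}\tfrac{\partial E}{\partial t}\bigr)\cdot\delta A-\epsilon_{0}(\nabla\cdot E)\,\delta V\Bigr)\,d^{3}x+(\text{boundary terms}).
\]
The temporal boundary term is killed by (ii); the boundary term at infinity by (iii) (or simply by localising the variations); and the boundary term on $\partial(\mathrm{supp}\,\chi)$ by (i), since $\delta A$ and $\delta V$ vanish there. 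As $(\delta A,\delta V)$ are otherwise arbitrary on the open set $\{\chi=0\}$, the fundamental lemma of the calculus of variations gives, pointwise outside matter, $\nabla\cdot E=0$ and $\nabla\times B=\tfrac{1}{C^{2}}\partial_{t}E$ --- the source-free forms of \eqref{eq:charges} and \eqref{eq:displacement} --- to be read together with $\nabla\cdot B=0$ and $\nabla\times E=-\partial_{t}B$, which hold identically from $B=\nabla\times A$ and $E=-\partial_{t}A-\nabla V$.

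The last step invokes Lemma \ref{lem:GaugeLorenz}: under its hypotheses $\nabla\cdot A+\tfrac{1}{C^{2}}\partial_{t}V=0$ on the region in question. Substituting $B=\nabla\times A$ and $E=-\partial_{t}A-\nabla V$ into $\nabla\times B=\tfrac{1}{C^{2}}\partial_{t}E$ and using $\nabla\times(\nabla\times A)=\nabla(\nabla\cdot A)-\Delta A$, the gradient terms assemble into $\nabla\bigl(\nabla\cdot A+\tfrac{1}{C^{2}}\partial_{t}V\bigr)=0$, leaving $\Box A=0$ on $\{\chi=0\}$; analogously $\nabla\cdot E=0$ together with the gauge condition yields $\Box V=0$ there. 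Hence the potentials $A,V$ --- and therefore $E$ and $B$ --- satisfy the homogeneous wave equation outside matter, the source-free counterpart of \eqref{eq:propagation}, which is the assertion.

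I expect the only genuinely delicate point to be the interface boundary term on $\partial(\mathrm{supp}\,\chi)$: it is precisely the stipulation that admissible variations vanish on the matter region --- so that the ``known'' data $(\mathsf{A},\mathsf{V})$ are really held fixed --- that annihilates this term and leaves a clean Euler--Lagrange equation in vacuum; were it retained one would reintroduce effective interface charges and currents and the conclusion would fail. A lesser point is to confirm that the spatial term at infinity vanishes, which is where the faster-than-$r^{-2}$ decay assumed before \eqref{eq:energy} enters (or one simply restricts to compactly supported variations, which already suffices to extract the pointwise equations). Everything else is routine vector calculus and integration by parts.
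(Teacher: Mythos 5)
Your argument is correct and reaches the stated conclusion, but it handles the constraint differently from the paper. The paper enforces $(A-\mathsf{A})\chi=0$ and $(V-\mathsf{V})\chi=0$ by adjoining Lagrange multipliers $\kappa$ and $\lambda$ to the Lagrangian density and varying $(A,V)$ freely everywhere; the Euler--Lagrange equations then come out as $\Box A=-\chi\kappa$, $\Box V=-\chi\lambda$, which give the homogeneous wave equation where $\chi=0$ \emph{and} permit the identification $j=\chi\kappa$, $\rho=\chi\lambda$ of the multipliers with the sources inside matter --- an identification the authors value, since it is what lets them read the theorem as Lorenz' statement on the identity of light and electrical currents. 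You instead restrict the class of admissible variations to those vanishing on $\mathrm{supp}\,\chi$, which kills the interface boundary term and yields the source-free equations $\nabla\cdot E=0$, $\nabla\times B=\frac{1}{C^{2}}\partial_{t}E$ directly on the complement of matter. For the conclusion as literally stated (wave propagation outside matter) your route is the more economical one, and it has the merit of making explicit a step the paper leaves tacit: that passing from the Euler--Lagrange equations to $\Box A=0$, $\Box V=0$ requires the Lorenz gauge of Lemma \ref{lem:GaugeLorenz} to cancel the $\nabla\left(\nabla\cdot A+\frac{1}{C^{2}}\partial_{t}V\right)$ term. What you give up is the interior information: the multiplier construction recovers the charge and current densities as the objects dual to the constraint, whereas restricted variations say nothing about what happens on $\mathrm{supp}\,\chi$. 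Your closing remarks on the interface and decay-at-infinity boundary terms are apt; the paper's ``applying Gauss theorem as usual'' glosses over exactly those points.
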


\begin{proof}
The result follows from the computation of the minimal action under
\begin{eqnarray*}
(V-\mathsf{V})\chi & = & 0\\
(A-\mathsf{A})\chi & = & 0
\end{eqnarray*}
Multiplying the constraints by the Lagrange multipliers $\lambda$
and $\kappa$ (the latter a vector), while we use the shorthand notations
$B=\nabla\times A$ and $E=\left(-\frac{\partial A}{\partial t}-\nabla V\right)$,
we need to variate
\[
{\cal A}=\frac{1}{2}\int dt\left(\int\left(\frac{1}{\mu_{0}}|B|^{2}-\epsilon_{0}|E|^{2}-\kappa\cdot(A-\mathsf{A})\chi+\lambda(V-\mathsf{V})\chi\right)\,d^{3}x\right).
\]
 After variation and applying Gauss theorem as usual leads to
\begin{eqnarray*}
\Box A & = & -\chi\kappa\\
\Box V & = & -\chi\lambda
\end{eqnarray*}
which allows us to identify $\mu_{0}j=\chi\kappa$ (the density of
current inside the material responsible for $\mathsf{A}$ and ${\displaystyle \frac{\rho}{\epsilon_{0}}=}\chi\lambda$
the density of charge responsible for $\mathsf{V}$.
\end{proof}
This theorem deserves to be called Lorenz' theorem since he wrote
about these relations \citep[p.300,][]{lore67}: ``This result is
a new proof of the identity of the vibrations of light with electrical
currents; for it is clear now that not only the laws of light can
be deduced from those of electrical currents, but that the converse
way may be pursued, provided the same limiting conditions are added
which the theory of light requires.''

Thus, electromagnetism has been summarised in terms of the potential
$V$ and vector potential (or electrokinetic momentum, according to
Maxwell-Faraday) $A$ (defined via eq. \ref{eq:Lorenz}), the principle
of least action and the continuity equation. All other relations consist
in either naming intermediate quantities, such as $B=\nabla\times A$
and $E=-(\frac{\partial A}{\partial t}+\nabla V)$, or stating consequences
of these definitions, such as $\frac{\partial}{\partial t}\nabla\times A=\nabla\times\frac{\partial A}{\partial t}$
(which leads immediately to Faraday's induction law), or $\nabla\cdot E=\rho$
(that follows after the use of the continuity equation).

The connection of this theoretical structure with observations in
nature is given by the phenomenological map \citep{Solari21-phenomenologico}
providing observational content to charges and currents in matter.
We are now ready to show how Lorentz' force belongs to this context
and is already inscribed in the former equations.

\subsubsection{Deduction of Lorentz' force revisited}

When we consider two pieces of electrified matter in interaction we
can envisage a different form of constructing the system. In the first
step, the bodies are far apart, so that we can assume that they do
not interact, and are electrified to reach their actual state. Next,
they will be brought together to their corresponding mechanical positions
in terms of a thought process called a \emph{virtual displacement}.
The formalisation of this idea already present in Maxwell is called
a \emph{virtual variation} \citep[Ch 4, 21 B p. 92, ][]{arno89}.
The force associated to this virtual displacement will then result
from the variation of the interaction terms in the Lagrangian. Given
the electromagnetic contribution to the action, determining the contribution
to the force amounts to applying Hamilton's principle using a virtual
displacement of the probe (which we indicate with subindex 2) with
respect to the primary circuit producing the fields (subindex 1).
In formulae, we must request
\[
\delta_{\bar{x}(t)}{\cal A}=0
\]
with $\delta\bar{x}(t_{0})=0$ and $\delta\bar{x}(t)=0$, where $\bar{x}(t)$
denotes the relative distance between probe and primary circuit. In
so doing, we must take into account that there is a corresponding
variation of the velocity $\delta\dot{\bar{x}}(t)$. Recall that according
to \ref{eq:propagation} and \ref{eq:Current}, $\bar{x}$ occurs
in $(\rho_{2},j_{2})$ and $\dot{\bar{x}}$ occurs only in $j_{2}$.
We state the result as a theorem:
\begin{thm}
\label{thm:TheForce} Assuming that all of $|B|^{2},|E|^{2},A,j,V,\rho$
decrease faster than $\frac{1}{r^{2}}$ at infinity and under the
conditions of Lemmas \ref{lem:wave}--\ref{lem:gauge-independent}
(in particular the action of eq. \ref{eq:action}) the electromagnetic
force \textup{
\[
F_{em}=\bar{j}_{2}\times B_{1}+\bar{\rho}_{2}E_{1}
\]
}on the probe can be deduced from Hamilton's principle of minimal
action (\textup{$\delta_{\bar{x}(t)}{\cal A}=0$}) using a virtual
displacement $\delta_{\bar{x}}$ of the probe (which we indicate with
subindex 2) with respect to the primary circuit producing the fields
(subindex 1).
\end{thm}

For the proof, see Subsection \ref{subsec:TheProof} , Appendix \ref{sec:Proofs}.
Lorentz considered only the case $j_{2}=0$, hence $\bar{j}_{2}=\bar{\rho}_{2}\dot{\bar{x}}$.
Maxwell considered the general case in \citet[{[602]}][]{maxw73}
but using his ``total current'' instead, consistent with his belief
in the ether-based displacement current as a material current. However,
his elaboration is based on galvanic currents, finally modifying the
final result by analogy \citet[{[619], eq. C}]{maxw73}. The present
approach differs with Lorentz' in the broader concept of current and
in that the participating quantities are fully relational and describe
the interaction between probe and primary circuit.

\subsection{Subjective perspective}

Since the Lagrangian is expressed by an integral, we are free to change
the integration variable by a fixed translation leaving the integral
unchanged. Actually, we can use a different translation for each time
in \eqref{eq:action}. We propose to change integration variable from
$x$ to $z,$ with $x=z+\bar{x}(t)$. Instead of performing the change
in eq.\eqref{eq:variation} we will save effort and perform it in
eq.\eqref{eq:variada}, prior to the partial integration in time,
namely 
\begin{eqnarray*}
\delta{\cal A} & = & \int dt\int d^{3}x\left[\bar{j}_{2}\cdot\left(\delta\bar{x}\cdot\nabla\right)A-\bar{\rho}_{2}\left(\delta\bar{x}\cdot\nabla\right)V+\delta\dot{\bar{x}}\cdot\left(A\bar{\rho}_{2}\right)\right]
\end{eqnarray*}
(with $\bar{j}_{2}(x,t),\quad\bar{\rho}_{2}(x,t)$ given by eq.\ref{eq:Current}).
We introduce the following notation

\begin{eqnarray}
z & = & x-\bar{x}(t)\nonumber \\
\underline{V}(z,t) & = & V(z+\bar{x}(t),t)\label{eq:underbar}\\
\underline{A}(z,t) & = & A(z+\bar{x}(t),t)\nonumber 
\end{eqnarray}

Hence, the variation reads now

\begin{eqnarray*}
\delta\int\!{\cal L}dt & = & \int dt\int d^{3}z\left[\left(j_{2}(z,t)+\dot{\bar{x}}\rho_{2}(z,t)\right)\cdot\left(\delta\bar{x}\cdot\nabla\right)\underline{A}-\rho_{2}(z,t)\left(\delta\bar{x}\cdot\nabla\right)\underline{V}\right]\\
 &  & +\int dt\int d^{3}z\left[\delta\dot{\bar{x}}\cdot\left(\underline{A}\rho_{2}(z,t)\right)\right]
\end{eqnarray*}
integrating by parts in time the last term and using the relations

\begin{eqnarray*}
\int dt\left[\delta\dot{\bar{x}}\cdot\left(\underline{A}\rho_{2}\right)\right] & = & \int dt\left[-\rho_{2}\delta\bar{x}\cdot\left[\frac{\partial\underline{A}}{\partial t}\right]-\delta\bar{x}\cdot\underline{A}\frac{\partial\rho_{2}}{\partial t}\right]\\
\left[\frac{\partial\underline{A}}{\partial t}\right] & \equiv & \frac{\partial}{\partial t}A(z+\bar{x}(t),t)=\left.\frac{\partial}{\partial t}A(x,t)\right|_{x=z+\bar{x}(t)}+\left(\dot{\bar{x}}\cdot\nabla\right)A(z+\bar{x}(t),t)\\
-\nabla\underline{V}-\left[\frac{\partial\underline{A}}{\partial t}\right] & = & -\nabla\underline{V}-\left.\frac{\partial}{\partial t}A(x,t)\right|_{x=z+\bar{x}(t)}-\left(\dot{\bar{x}}\cdot\nabla\right)\underline{A}
\end{eqnarray*}
(along with the continuity equation) we arrive after some algebra
and further use of Gauss' theorem to:

\begin{eqnarray*}
\delta\int\!{\cal L}dt & = & \int dt\int d^{3}z\left[j_{2}\cdot\left(\delta\bar{x}\cdot\nabla\right)\underline{A}-j_{2}\cdot\nabla\left(\delta\bar{x}\cdot\underline{A}\right)\right]\\
 &  & +\int dt\int d^{3}z\left[(\rho_{2}\dot{\bar{x}})\cdot\left(\delta\bar{x}\cdot\nabla\right)\underline{A}+\rho_{2}\delta\bar{x}\cdot(-\nabla\underline{V}-\frac{\partial\underline{A}}{\partial t})\right]
\end{eqnarray*}

Finally, the following relations (the second one valid for any sufficiently
differentiable scalar function $\Phi$)
\begin{eqnarray*}
\dot{x}\cdot\left(\delta\bar{x}\cdot\nabla\right)\underline{A} & = & \delta\bar{x}\cdot\nabla(\dot{x}\cdot\underline{A})\\
(\delta\bar{x}\cdot\nabla)\Phi(x,t) & = & -\delta\bar{x}\times(\nabla\times\Phi)+\nabla(\delta\bar{x}\cdot\Phi)\\
j_{2}\cdot\left(\delta\bar{x}\cdot\nabla\right)\underline{A}-j_{2}\cdot\nabla(\delta\bar{x}\cdot\underline{A}) & = & j_{2}\cdot\left(-\delta\bar{x}\times(\nabla\times\underline{A}\right)
\end{eqnarray*}
lead us to the next result:

\begin{eqnarray*}
\delta\int{\cal L}dt & = & \int dt\int d^{3}z\left[j_{2}\cdot\left(-\delta\bar{x}\times(\nabla\times\underline{A})\right)\right]\\
 &  & +\int dt\int d^{3}z\left[\delta\bar{x}\cdot\rho_{2}\left(-\left[\frac{\partial\underline{A}}{\partial t}\right]-\nabla\left(\underline{V}-\dot{\bar{x}}\cdot\underline{A}\right)\right)\right]\\
 & = & \int dt\int d^{3}z\,\delta\bar{x}\cdot\left[j_{2}\times\underline{B}+\rho_{2}\left(-{\displaystyle \left[\frac{\partial\underline{A}}{\partial t}\right]}-\nabla(\underline{V}-\dot{\bar{x}}\cdot\underline{A})\right)\right]
\end{eqnarray*}

Hence we have two expressions for the mechanical contribution of the
electromagnetic force: The one obtained from eq.\eqref{eq:variada}
above and the present one, i.e.,
\[
F_{em}=\!\int\!d^{3}x\left[\bar{j}_{2}\times B+\bar{\rho}_{2}E\right]=\!\int\!d^{3}z\left[j_{2}\times\underline{B}+\rho_{2}\left(-{\displaystyle \left[\frac{\partial\underline{A}}{\partial t}\right]}-\nabla(\underline{V}-\dot{\bar{x}}\cdot\underline{A})\right)\right]
\]
(recall the relation among functions defined in eqs. \ref{eq:Current}
and \ref{eq:underbar}).

We call this relation ``Maxwell's transformation theorem'', since
Maxwell showed the following result in \citet[{[601],}][]{maxw73}\footnote{Maxwell's result covers also rotating reference systems, but we skip
this case to keep the argumentation simple.}:
\begin{thm}
\textbf{(Maxwell's invariance theorem}) \label{Theorem-Maxwell's-invariance}:
Let $x=x^{\prime}+\bar{x}(t)$, and correspondingly $v=v^{\prime}+\dot{\bar{x}}$.
Define ${\displaystyle A^{\prime}(x^{\prime},t)\equiv A(x,t)}$ \footnote{Maxwell refers to this expression as: “the theory of the motion of
a body of invariable form“. For any property of matter, this relation
is immediate.}, then the value of the electromotive force at a point $x$ does not
depend on the choice of reference system if and only if $\psi(x,t)$
transforms as $\psi^{\prime}(x^{\prime},t)\equiv\psi(x^{\prime}-\bar{x},t)-\dot{\bar{x}}\cdot A^{\prime}(x^{\prime},t)$.
In formulae, $\mathcal{E}^{\prime}(x^{\prime},t)=\mathcal{E}(x,t)$,
where $\psi$ is an undetermined electrodynamic potential (introduced
for the sake of generality):
\end{thm}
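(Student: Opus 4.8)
The plan is to write the electromotive force in Maxwell's form, $\mathcal{E} = v\times B - \partial_t A - \nabla\psi$ (with the motional term included, as in Maxwell's article [598] and as it already surfaced in the Subjective perspective computation for the part of the force acting on $\bar{\rho}_2$), and then simply transcribe it into the moving frame by the substitution $x = x^{\prime}+\bar{x}(t)$, $v = v^{\prime}+\dot{\bar{x}}$, $A^{\prime}(x^{\prime},t)\equiv A(x,t)$. First I would record the three elementary facts that do all the work: (i) a rigid spatial translation commutes with $\nabla$ and with $\nabla\times$, so $B^{\prime}(x^{\prime},t)=\nabla^{\prime}\times A^{\prime}(x^{\prime},t)=(\nabla\times A)(x,t)=B(x,t)$ and, for any scalar $g$, $\nabla^{\prime}g(x^{\prime}+\bar{x}(t),t)=(\nabla g)(x,t)$; (ii) the chain rule for the convective time derivative, $\partial_t A^{\prime}(x^{\prime},t)=\left.\partial_t A\right|_{(x,t)}+(\dot{\bar{x}}(t)\cdot\nabla)A(x,t)$ with $x=x^{\prime}+\bar{x}(t)$; and (iii) the vector identity $(\dot{\bar{x}}\cdot\nabla)A=\nabla(\dot{\bar{x}}\cdot A)-\dot{\bar{x}}\times(\nabla\times A)=\nabla(\dot{\bar{x}}\cdot A)-\dot{\bar{x}}\times B$, valid because $\dot{\bar{x}}(t)$ is constant in space, which is the same identity already used in the Subjective perspective passage.

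Next I would substitute. Writing out $\mathcal{E}^{\prime}(x^{\prime},t)=v^{\prime}\times B^{\prime}(x^{\prime},t)-\partial_t A^{\prime}(x^{\prime},t)-\nabla^{\prime}\psi^{\prime}(x^{\prime},t)$ and using (i)--(ii), every term becomes an original-frame quantity evaluated at $x=x^{\prime}+\bar{x}(t)$:
\[
\mathcal{E}^{\prime}(x^{\prime},t)=(v-\dot{\bar{x}})\times B-\partial_t A-(\dot{\bar{x}}\cdot\nabla)A-\nabla^{\prime}\psi^{\prime}(x^{\prime},t).
\]
Applying (iii), the two occurrences of $\dot{\bar{x}}\times B$ cancel and the convective term collapses to a pure gradient, so
\[
\mathcal{E}^{\prime}(x^{\prime},t)=v\times B-\partial_t A-\nabla(\dot{\bar{x}}\cdot A)-\nabla^{\prime}\psi^{\prime}(x^{\prime},t)=\mathcal{E}(x,t)+\nabla\big(\psi(x,t)-\dot{\bar{x}}\cdot A(x,t)\big)-\nabla^{\prime}\psi^{\prime}(x^{\prime},t).
\]
Hence $\mathcal{E}^{\prime}(x^{\prime},t)=\mathcal{E}(x,t)$ for every $x$ if and only if $\nabla^{\prime}\psi^{\prime}(x^{\prime},t)=\nabla^{\prime}\big(\psi(x,t)-\dot{\bar{x}}\cdot A^{\prime}(x^{\prime},t)\big)$, i.e.\ if and only if $\psi^{\prime}$ coincides with the transformed potential of the statement up to an additive function of $t$ alone --- which is exactly the gauge ambiguity $\psi$ already carries inside the electromotive force, and the reason $\psi$ is called ``undetermined''. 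This single computation delivers both implications: ``if'' is the forward substitution, and ``only if'' is the observation that the residue $\nabla^{\prime}\psi^{\prime}-\nabla^{\prime}(\cdots)$ must vanish identically in $x^{\prime}$.

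There is no genuinely hard step here; it is bookkeeping with one chain rule and one vector identity. The point demanding care --- and the conceptual heart of the theorem --- is the interplay between the motional term $v\times B$ and the convective term $(\dot{\bar{x}}\cdot\nabla)A$: it is precisely the $-\dot{\bar{x}}\times B$ half of identity (iii) that annihilates the velocity-dependent part of the change of frame and leaves behind only a gradient that can be reabsorbed into $\psi^{\prime}$. Had one instead defined the force without the motional term (using $E=-\partial_t A-\nabla\psi$ in place of $\mathcal{E}$), the residue would retain the non-gradient field $\dot{\bar{x}}\times B$, and no redefinition of a scalar potential could restore frame-independence; so the theorem also pins down which ``force'' is the invariant one. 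I would close by checking the special case $v^{\prime}=0$, $v=\dot{\bar{x}}$ (probe at rest in the moving frame), which reproduces verbatim the two expressions for $F_{em}$ obtained in the Subjective perspective section and fixes $\psi=V$, $\psi^{\prime}(x^{\prime},t)=\underline{V}(x^{\prime},t)-\dot{\bar{x}}\cdot\underline{A}(x^{\prime},t)$.
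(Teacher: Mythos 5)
Your proposal is correct and follows essentially the same route as the paper: both proofs reduce to the chain rule $\partial_t A^{\prime}(x^{\prime},t)=\partial_t A(x,t)+(\dot{\bar{x}}\cdot\nabla)A$ together with the identity $v\times(\nabla\times A)=\nabla(v\cdot A)-(v\cdot\nabla)A$, the only cosmetic difference being that the paper rewrites the whole electromotive force in convective form in each frame before matching, while you substitute directly and apply the identity to the leftover convective term. Your treatment is in fact slightly more complete than the paper's, since you make the ``only if'' direction explicit and note the residual additive function of $t$ absorbed by the undetermined potential $\psi$.
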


\[
\mathcal{E}^{\prime}(x^{\prime},t)=v^{\prime}\times(\nabla\times A^{\prime}(x^{\prime},t))-\frac{\partial A^{\prime}(x^{\prime},t)}{\partial t}-\nabla\psi^{\prime}(x^{\prime},t).
\]

\begin{proof}
First, according to the definition, we have ${\displaystyle A^{\prime}(x^{\prime},t)=A(x^{\prime}+\bar{x},t)}$.

Next, we note that by straightforward vector calculus identities,
Maxwell's electromotive force (eq. B in \citep[{[598],}][]{maxw73})
can be restated as 
\[
\mathcal{E}(x,t)=-\frac{\partial A(x,t)}{\partial t}-\left(v\cdot\nabla\right)A(x,t)-\nabla\left(\psi(x,t)-v\cdot A(x,t)\right)
\]
In the new coordinate system we compute: 
\begin{eqnarray*}
\mathcal{E^{\prime}}(x^{\prime},t) & = & v^{\prime}\times(\nabla\times A^{\prime}(x^{\prime},t))-\frac{\partial A^{\prime}(x^{\prime},t)}{\partial t}-\nabla\psi^{\prime}(x^{\prime},t)\\
 & = & -\frac{\partial A^{\prime}}{\partial t}(x^{\prime},t)-\left(v^{\prime}\cdot\nabla\right)A^{\prime}-\nabla\left(\psi^{\prime}(x^{\prime},t)-v^{\prime}\cdot A^{\prime}(x^{\prime},t)\right).
\end{eqnarray*}
Subsequently, under the present assumption ${\displaystyle A^{\prime}(x^{\prime},t)\equiv A(x,t)}$
we may rewrite 
\[
\frac{\partial A^{\prime}(x^{\prime},t)}{\partial t}=\left.\frac{\partial A(x,t)}{\partial t}\right|_{x^{\prime}+\bar{x}}+\left(\dot{\bar{x}}\cdot\nabla\right)A
\]
leading to 
\begin{eqnarray*}
\mathcal{E^{\prime}}(x^{\prime},t) & = & -\frac{\partial A(x,t)}{\partial t}-\left(\dot{\bar{x}}\cdot\nabla\right)A-\left(v^{\prime}\cdot\nabla\right)A^{\prime}-\nabla\left(\psi^{\prime}(x^{\prime},t)-v^{\prime}\cdot A^{\prime}(x^{\prime},t)\right)\\
 & = & -\frac{\partial A(x,t)}{\partial t}-\left(v\cdot\nabla\right)A-\nabla\left(\psi^{\prime}(x^{\prime},t)-v^{\prime}\cdot A^{\prime}(x^{\prime},t)\right)\\
 & = & -\frac{\partial A(x,t)}{\partial t}-\left(v\cdot\nabla\right)A-\nabla\left(\psi(x,t)-\dot{\bar{x}}\cdot A(x,t)-v^{\prime}\cdot A^{\prime}(x^{\prime},t)\right)\\
 &  & -\frac{\partial A(x,t)}{\partial t}-\left(v\cdot\nabla\right)A-\nabla\left(\psi(x,t)-v\cdot A(x,t)\right)=\mathcal{E}(x,t)
\end{eqnarray*}
where to proceed from the second line to the third, we used the condition
$\psi^{\prime}(x^{\prime},t)\equiv\psi(x,t)-\dot{\bar{x}}\cdot A(x,t)$.
Note that $\psi$ is Maxwell's undetermined potential that once determined
in one system of reference transforms according to the theorem to
other systems.
\end{proof}
Maxwell's Art. {[}601{]} of the Treatise states “It appears from this
that the electromotive intensity is expressed by a formula of the
same type, whether the motions of the conductors be referred to fixed
axes or to axes moving in space, the only difference between the formulae
being that in the case of moving axes the electric potential $\psi$
must be changed into $\psi+\psi^{\prime}$.“

\subsubsection{No Arbitrariness Principle}

The invariance of the force as determined from the point of view of
the source (primary circuit) or of the target (probe, secondary circuit)
in the previous subsection follows from the general invariance of
integrals in front of coordinate changes, supported in Maxwell's invariance
theorem. The result is a special case of the No Arbitrariness Principle
(NAP) \citep{sola18b}, stating that the description of natural processes
cannot depend on arbitrary choices (in this case the choice of subjective
reference frame). Indeed, we can attain a fully relational description
of electromagnetic phenomena.

The consideration of three electrified bodies will allow us to inspect
this problem. Let $x_{ij}(t)$ the relative position between the $i$
and the $j$ body, with $i,j\in\left\{ 1,2,3\right\} $, clearly,
$x_{ii}=0$. The relative positions satisfy $x_{12}+x_{23}+x_{31}=0$
and correspondingly, the relative velocities satisfy $v_{12}+v_{23}+v_{31}=0$.
Let $\zeta_{i}^{i}$ be each of the components of the four dimensional
vector $(j_{i},\rho_{i})$ describing the density of currents and
the density of charges in body $i$, as perceived from a frame fixed
to itself, and $\zeta_{i}^{j}$ the same components as described from
the frame of body $j$. We will denote by ${\cal W}$ the operator
that produces the delayed propagation of the EM situation in the body
and, finally, $T(x)$ the operator that applies a time-dependent translation
to current+charge vector as in \ref{eq:Current}. Clearly $T(0)=Id$.
Then, given the potentials of body $i$ in its own frame, we get the
potentials in the body $j$ frame as
\[
(A_{i}^{j},V_{i}^{j})=\left({\cal W}T(x_{ji})\Box\right)(A_{i}^{i},V_{i}^{i})
\]
The operators $T(x_{ij}(t))$ form a group of transformations. It
is easy to verify that the product law is:
\[
T(x(t))T(y(t))=T((x+y)(t)
\]
Letting
\[
\tilde{T}(x_{ij})={\cal W}T(x_{ji})\Box
\]
 we notice that the operators $\tilde{T}(x_{ij})$ are conjugated
to $T(x_{ij})$ since the relation $\Box{\cal W}=Id$ and ${\cal W}\Box=Id$
produce the conjugation relation
\begin{eqnarray*}
\Box\tilde{T}(x_{ij}) & = & T(x_{ij})\Box\\
\tilde{T}(x_{ij}){\cal W} & = & {\cal W}T(x_{ij})
\end{eqnarray*}
Thus, the rigid translation of charge densities and currents $T(x_{ij})$
acts with a conjugate representation on the wave representatives.
This is the abstract content of Maxwell's theorem.

We can finally write all the fields in terms of one reference body,
say, the body with $i=1$ as
\[
(A_{1},V_{1})={\cal W}\left(\zeta_{1}^{1}+T(x_{12})\zeta_{2}^{2}+T(x_{13})\zeta_{3}^{3}\right)
\]
We next apply $\tilde{T}(x_{21})$ to this expression, we get
\begin{eqnarray*}
\tilde{T}(x_{21})(A_{1},V_{1}) & = & \tilde{T}(x_{21}){\cal W}\left(\zeta_{1}^{1}+T(x_{12})\zeta_{2}^{2}+T(x_{13})\zeta_{3}^{3}\right)\\
 & = & \left({\cal W}T(x_{21})\Box\right){\cal W}\left(\zeta_{1}^{1}+T(x_{12})\zeta_{2}^{2}+T(x_{13})\zeta_{3}^{3}\right)\\
 & = & \left({\cal W}T(x_{21})\right)\left(\zeta_{1}^{1}+T(x_{12})\zeta_{2}^{2}+T(x_{13})\zeta_{3}^{3}\right)\\
 & = & {\cal W}\left(T(x_{21})\zeta_{1}^{1}+T(x_{21})T(x_{12})\zeta_{2}^{2}+T(x_{21})T(x_{13})\zeta_{3}^{3}\right)\\
 & = & {\cal W}\left(T(x_{21})\zeta_{1}^{1}+\zeta_{2}^{2}+T(x_{23})\zeta_{3}^{3}\right)\\
 & = & (A_{2},V_{2})
\end{eqnarray*}
Which shows how the subjective representation of fields transforms
consistently with the time-dependent-translations group. When the
admitted reference is restricted to inertial bodies \citep{NatielloManuscript-NATTCO-21},
the group of transformations is the Galilean group.

Let us specify the above construction for the electromagnetic force
that bodies $2$ and $3$ exert on a moving charge $q_{1}$. Expressing
the fields as computed by bodies $2$ and $3$ respectively, this
force reads,
\begin{eqnarray*}
F_{1} & = & q_{1}\left(v_{12}\times B_{2}^{2}+E_{2}^{2}\right)+q_{1}\left(v_{13}\times B_{3}^{3}+E_{3}^{3}\right)
\end{eqnarray*}
 From Maxwell's invariance theorem, the relation $(A_{3}^{2},V_{3}^{2})=T_{23}(A_{3}^{3},V_{3}^{3})$
is just a recasting of the vector potential $A_{3}^{3}$ in the coordinates
of body $2$. Hence, both potentials take the same value on any given
point of space. Consequently $B_{3}^{2}=B_{3}^{3}$. The transformation
of the electric field and scalar potential reads (also by Theorem
\ref{Theorem-Maxwell's-invariance}), 
\begin{eqnarray*}
E_{3}^{3}=-\nabla V_{3}^{3}-\frac{\partial}{\partial t}A_{3}^{3} & = & -\nabla\left(V_{3}^{2}-v_{32}\cdot A_{3}^{2}\right)-\frac{\partial}{\partial t}A_{3}^{2}-\left(v_{32}\cdot\nabla\right)A_{3}^{2}\\
 & = & v_{32}\times\left(\nabla\times A_{3}^{2}\right)-\nabla V_{3}^{2}-\frac{\partial}{\partial t}A_{3}^{2}\\
 & = & v_{32}\times B_{3}^{2}+E_{3}^{2}
\end{eqnarray*}
Hence,

\begin{eqnarray*}
F_{1} & = & q_{1}\left(v_{12}\times B_{2}^{2}+E_{2}^{2}\right)+q_{1}\left(v_{13}\times B_{3}^{2}+\left(v_{32}\times B_{3}^{2}+E_{3}^{2}\right)\right)\\
 & = & q_{1}\left(v_{12}\times\left(B_{2}^{2}+B_{3}^{2}\right)+\left(E_{2}^{2}+E_{3}^{2}\right)\right)
\end{eqnarray*}
i.e., we have proven
\begin{cor}
Under the assumptions of Theorem \ref{Theorem-Maxwell's-invariance},
the interaction force between electromagnetic bodies is invariant
in front of arbitrary (subjective) Galilean translations of the reference
system.
\end{cor}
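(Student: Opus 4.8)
The plan is to read the corollary as the $n$-body counterpart of the three-body chain of equalities displayed just before it, together with the observation that ``reference system'' need not be rigidly attached to any of the bodies. First I would fix the probe --- say the body carrying charge $q_{1}$ --- and write the total electromagnetic force acting on it as a superposition $F_{1}=q_{1}\sum_{k\neq 1}\bigl(v_{1k}\times B_{k}^{k}+E_{k}^{k}\bigr)$, where each contribution is initially expressed in the frame rigidly attached to its own source body $k$. This decomposition is legitimate because the field equations are linear in charges and currents, so the potentials $(A,V)$, and hence $(E,B)$, add. What has to be shown is that this vector does not change when every term is recomputed in one common, but otherwise arbitrary, frame.

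Second, I would transport each source field into the chosen common frame, call it the frame of body $m$; the argument is identical when the frame is attached to no body at all, one then simply uses the relative position $x_{mk}(t)$ of body $k$ with respect to that frame. Using the time-dependent translation operators $T(x_{mk})$ of eq.~\eqref{eq:Current}, their group law $T(x)T(y)=T((x+y)(t))$, and the conjugation relation $\Box\tilde{T}(x_{ij})=T(x_{ij})\Box$, the potentials of body $k$ as seen from $m$ are $A_{k}^{m}=T(x_{mk})A_{k}^{k}$, and Maxwell's invariance theorem (Theorem~\ref{Theorem-Maxwell's-invariance}) dictates how the derived fields transform: the magnetic field is unchanged, $B_{k}^{k}=B_{k}^{m}$, while the electromotive field picks up the motional term, $E_{k}^{k}=v_{km}\times B_{k}^{m}+E_{k}^{m}$.

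Third, I would substitute these two rules into the $k$-th summand and split the probe--source relative velocity through the common frame, $v_{1k}=v_{1m}+v_{mk}$, obtaining
\[
v_{1k}\times B_{k}^{k}+E_{k}^{k}=\bigl(v_{1m}+v_{mk}\bigr)\times B_{k}^{m}+v_{km}\times B_{k}^{m}+E_{k}^{m}=v_{1m}\times B_{k}^{m}+E_{k}^{m},
\]
where the two terms containing $v_{mk}=-v_{km}$ cancel. Summing over $k$ gives $F_{1}=q_{1}\bigl(v_{1m}\times B^{m}+E^{m}\bigr)$ with $B^{m}=\sum_{k\neq 1}B_{k}^{m}$ and $E^{m}=\sum_{k\neq 1}E_{k}^{m}$ the total fields in the common frame; this is the same physical force vector regardless of the choice of $m$, the passage $v_{1k}\mapsto v_{1m}$ being nothing more than the decomposition of the probe's velocity in the new frame. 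For three bodies this is exactly the computation preceding the corollary, the only new ingredient being linearity of superposition, which lets the same cancellation run term by term.

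The step I expect to carry the real content is the second one: that the transport of fields between subjective frames is governed \emph{consistently} by the translation group, i.e. that the conjugated operators $\tilde{T}(x_{ij})={\cal W}T(x_{ji})\Box$ inherit the group structure of the $T(x_{ij})$ and that Maxwell's invariance theorem applies uniformly to every source term. Once that is granted, the cancellation of the $v_{mk}$-terms is \emph{forced} by the antisymmetry $x_{ij}=-x_{ji}$ (hence $v_{ij}=-v_{ji}$), and the remainder is bookkeeping. A minor point to keep honest is the undetermined potential $\psi$ of Theorem~\ref{Theorem-Maxwell's-invariance}: it must be transported by the same rule $\psi^{\prime}=\psi-\dot{\bar{x}}\cdot A^{\prime}$ for every body, so that the $E$-field transformation invoked above holds with no leftover gradient contribution.
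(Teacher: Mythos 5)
Your proposal is correct and follows essentially the same route as the paper: decompose the force as a superposition over sources, transport each source's fields into a common frame via Maxwell's invariance theorem ($B$ unchanged, $E$ acquiring the motional term $v\times B$), and let the antisymmetry of the relative velocities cancel the extra cross-product terms. The paper carries this out explicitly only for three bodies with body 2's frame as the common reference, whereas you phrase it for $n$ bodies and an arbitrary common frame $m$, but this is a routine generalization of the same cancellation, not a different argument.
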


In the present framework the electromagnetic force is a Galilean invariant
describing an interaction between primary and secondary circuits.
Changing the reference system for the description while keeping the
relative motion does not affect the force. This is the context in
which translational invariance is proper \citep{sola18b} and where
the Galilean transformation emerges. The situation must be distinguished
from that in which the secondary circuit is put in relative motion
with respect to the primary circuit and, in addition, the reference
frame is changed so that it follows the secondary circuit. Let us
call the primary circuit the \emph{source} and the secondary circuit
the \emph{receiver}. We know that if we put the receiver in relative
motion with respect to the source, it will detect a different signal
(Doppler's effect). Thus, a valid (experimentally accessible) physical
question is: which is the relation between what the receiver perceives
and what it would have perceived had it not been put in relative motion?
However, nothing changes in the relation between source and detector
if we decide to describe it while jogging around the experiment. What
confuses matters is that, in the case of instantaneous action at distance
(not our case), the same Galilean transformation can be used to connect
reference frames of description in (constant) relative motion and
to relate the perceptions of a receiver at rest or in (constant) motion
relative to the source. In short: proper Galilean invariance and Maxwell's
theorem are not in contradiction with experimental facts or with the
theoretical use of Lorentz' transformations for cases of relative
motion.

\section{Discussion}

The attitude towards error defines our science. When our observations
are incompatible with a hypothesis, i.e., they refute it, we can only
think of suppressing and/or replacing the indicted hypothesis as well
as all of its consequences, going back to the point where we had the
wrong idea, and restarting our progress from that point. This has
not been the path historically followed, hence we must ask: which
forces made it impossible? To answer the question we suggest to search
for historical, social and epistemological constrains.

Maxwell worked out his electromagnetism following the path of the
Göttingen school, but he needed to persuade himself of analogies with
matter \citep{maxw56}, hence his introduction of the ether. Lorentz
as well worked out from this mixed epistemological position. More
than a hundred years later we can say that their success comes from
the mathematisation while the problems with their approaches come
from their analogical thoughts, as we have observed for example in
Maxwell's force and Lorentz deduction of his force.\textcolor{brown}{{}
}Modern textbooks have a mixed attitude towards Lorentz force, some
of them pragmatically accepting it as an independent posit (see Introduction).
Even the nature of the velocity in the expression of the force is
not completely clear \citet{assi94} after dropping the ether.

In this work we have shown how to construct the basic elements of
electrodynamics developing a relational electromagnetism that reaches
a higher level of consistency and harmony than the accepted electromagnetism
at the time of Lorentz. Within this approach, the electromagnetic
force corresponding to Lorentz' emerges from the same principles that
produce Maxwell equations, this is, they form a consistent theory
a-priori rather than a-posteriori. The resulting relational electromagnetism
admits a subjective form that complies with the no-arbitrariness-principle
\citep{sola18b}, a principle that is more general and demanding than
Poincaré's principle-of-relativity. The construction begins with one
concept of space and ends within the same concept. We have further
discussed how Galilean invariance takes two different meanings, one
that is proper and derives from the non-arbitrariness-principle and
one that is accidental and restricted to instaneous-action-at-a-distance.
When each matter is given its proper place, there is no contradiction
between Galilean invariance and the use of Lorentz transformations
or experimental results on Doppler' shifts.

Not surprisingly, the elements of our construction can be found in
Gauss, Maxwell, Lorenz, and the Göttingen school. The correspondence
between Lorentz' current and Maxwell's invariance theorem is a key
element for this state of harmony. Lorentz' and Maxwell's transformations,
when restricted to inertial systems reduce to not-so-obvious presentations
of the Galilean group of transformations.

In this relational approach, electromagnetic interactions manifest
themselves in any reference system as waves of speed $C$, refuting
the idea that this is incompatible with classical space-time. But
then, what do we need to drop to accept this theory? The answer is
clear: analogy must be trusted no more that Maxwell did:
\begin{quote}
``...It appears to me, however, that while we derive great advantage
from the recognition of the many analogies between the electric current
and a current of material fluid, we must carefully avoid making any
assumption not warranted by experimental evidence, and that there
is, as yet, no experimental evidence to shew whether the electric
current is really a current of a material substance, or a double current,
or whether its velocity is great or small as measured in feet per
second.'' \citep[{[574],}][]{maxw73}
\end{quote}
The conflict between the Göttingen and Berlin schools illustrates
a radical difference in the conception of science. When physics left
the safe waters of mechanics to penetrate the phenomena that develop
inside matter, the Göttingen school was prepared to resign the matter-space
idealisation, as proposed by Faraday and Lorenz, maintaining the tradition
initiated by Leibniz and Newton that demands science not to introduce
physical hypotheses. The Berlin school struggled to preserve the matter-space
idealisation, introducing metaphysical entities such as the ether
or a body-like carrier of interactions, whose ultimate function is
to facilitate analogical thinking. When we separate action from matter
we need a form to propagate the action of a material entity at distance.
In contrast, when action and matter form a duality as proposed by
Faraday we do not need the ether or action carriers but our intuitive
view of the material world is shaken. While a social decision on this
matter has been taken, the voice of Faraday haunts us from the past:
``we ought to remember that it, in such cases, becomes a prejudice,
and inevitably interferes, more or less, with a clear-sighted judgement''
\citep[p.285]{fara44}.

\bibliographystyle{spbasic}
\bibliography{nuevasreferencias,referencias}

\appendix

\section{Lemmas and Proofs\label{sec:Proofs}}
\begin{lem}
\label{lem:wave}${\displaystyle A(x,t)=\frac{\mu_{0}}{4\pi}\int_{U}\left(\frac{j(y,t-\frac{1}{C}|x-y|)}{|x-y|}\right)\,d^{3}y\Rightarrow\Box A=-\mu_{0}j}$,
and similarly for $\epsilon_{0}\Box V=-\rho$, where $\Box\equiv{\displaystyle \Delta-\frac{1}{C^{2}}\frac{\partial^{2}}{\partial t^{2}}}$.
\end{lem}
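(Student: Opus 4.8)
The plan is to reduce the statement to two classical facts: that an outgoing spherical wave is annihilated by $\Box$ away from its source, and that $\Delta(1/r)=-4\pi\delta^{(3)}$. Write $r=|x-y|$ and denote by $[j](x,y,t)=j(y,t-r/C)$ the retarded value of the current, so that $A(x,t)=\frac{\mu_{0}}{4\pi}\int_{U}\frac{[j]}{r}\,d^{3}y$. Since $r$ is independent of $t$, the two time derivatives in $\Box$ pass under the integral sign with no difficulty, giving $\partial_t^2 A=\frac{\mu_0}{4\pi}\int_U \frac1r\,\partial_t^2 j(y,t-r/C)\,d^3y$; all the remaining work lies in computing $\Delta_x$ of the kernel $[j]/r$.

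First I would dispose of the region $x\neq y$. Fixing $y$ and treating $s\mapsto j(y,s)$ as a fixed profile $\phi$, the kernel $\phi(t-r/C)/r$ depends on $x$ only through $r$, so one may use the radial form $\Delta\psi=\frac{1}{r}\partial_r^2(r\psi)$. With $\psi=\phi(t-r/C)/r$ one has $r\psi=\phi(t-r/C)$, hence $\partial_r^2(r\psi)=\frac{1}{C^2}\phi''(t-r/C)$ and therefore $\Delta\psi=\frac{1}{C^2 r}\phi''(t-r/C)=\frac{1}{C^2}\partial_t^2\psi$; that is, $\Box_x\!\left(\tfrac{[j]}{r}\right)=0$ for $x\neq y$. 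Consequently the integrand can contribute to $\Box A$ only through the diagonal $x=y$.

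To extract that contribution I would split
\[
\frac{[j]}{r}=\frac{j(y,t-r/C)-j(y,t)}{r}+\frac{j(y,t)}{r}.
\]
By Taylor expansion in $r$ the first summand is bounded near $r=0$, hence locally integrable, and being annihilated by $\Box_x$ off the diagonal it carries no distributional mass at $x=y$; the second summand contributes $j(y,t)\,\Delta_x(1/r)=-4\pi\,j(y,t)\,\delta^{(3)}(x-y)$. Integrating over $y\in U$ then gives $\Box A(x,t)=\frac{\mu_{0}}{4\pi}\int_{U}\!\left(-4\pi\,\delta^{(3)}(x-y)\right) j(y,t)\,d^{3}y=-\mu_{0}j(x,t)$ for $x$ interior to $U$. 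Since the scalar potential of \eqref{eq:Lorenz} equals $\frac{1}{4\pi\epsilon_0}\int_U\frac{\rho(y,t-r/C)}{r}d^3y$ (using $\mu_0 C^2=\epsilon_0^{-1}$), the identical computation with $j$ replaced by $\rho$ yields $\epsilon_0\Box V=-\rho$.

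The step I expect to require the most care is the treatment of the diagonal singularity: the claim that the bounded summand produces no $\delta$-contribution, and the interchange of $\Delta_x$ with the singular integral, must be made rigorous. The standard device — the ``standard operations of vector calculus'' alluded to in the main text — is to excise a ball $B_\epsilon(x)$, apply the divergence theorem (second Green's identity) on $U\setminus B_\epsilon(x)$, and let $\epsilon\to0$: the volume integral over the exterior vanishes by the spherical-wave computation of the second paragraph, the surface term on $\partial U$ is controlled by the decay hypothesis, and the surface term on $\partial B_\epsilon(x)$ converges to $-4\pi\,j(x,t)$, its leading $\epsilon^{-2}$ behaviour being integrated against the $\epsilon^{2}$ of the area element while the retardation $t-\epsilon/C\to t$ and the radial-derivative corrections are $O(\epsilon)$. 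No gauge or continuity relation is needed for this lemma; it is purely the assertion that the retarded potentials solve the inhomogeneous wave equation.
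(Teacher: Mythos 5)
Your argument is correct in substance but follows a genuinely different route from the paper's. The paper computes $\nabla_{x}$ of the kernel $j_{i}(y,t-r/C)/r$ by direct expansion, obtains four terms in $\Delta_{x}$, and cancels the cross terms with the identities $2\nabla\frac{1}{r}\cdot\nabla\frac{r}{C}+\frac{1}{r}\Delta\frac{r}{C}=0$ and $|\nabla\frac{r}{C}|^{2}=\frac{1}{C^{2}}$, so that after subtracting $\frac{1}{C^{2}}\partial_{t}^{2}A_{i}$ only $\int j_{i}\,\Delta(1/r)\,d^{3}y$ survives, finished off by $\Delta(1/r)=-4\pi\delta^{(3)}$. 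You instead package all of that cancellation into the single structural observation that the retarded kernel is an outgoing spherical wave, $\Delta\psi=\frac{1}{r}\partial_{r}^{2}(r\psi)$ giving $\Box_{x}([j]/r)=0$ off the diagonal, and then isolate the diagonal contribution. Your route is conceptually cleaner, and you are more candid than the paper about where the rigour actually lies: both proofs hinge on extracting $-4\pi j(x,t)$ from the singularity, and your closing excision/Green's-identity argument is the standard complete justification of a step the paper performs only formally.

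One intermediate step is misstated, though your final paragraph repairs it. In the splitting $\frac{[j]}{r}=\frac{j(y,t-r/C)-j(y,t)}{r}+\frac{j(y,t)}{r}$, the first summand is \emph{not} annihilated by $\Box_{x}$ off the diagonal, and the second contributes more than the delta: since $\Box=\Delta-\frac{1}{C^{2}}\partial_{t}^{2}$ and $j(y,t)/r$ still depends on $t$, one has, for $x\neq y$, $\Box_{x}\bigl(j(y,t)/r\bigr)=-\frac{1}{C^{2}r}\partial_{t}^{2}j(y,t)\neq0$, and correspondingly $\Box_{x}$ of the difference term equals $+\frac{1}{C^{2}r}\partial_{t}^{2}j(y,t)$ there, because the full kernel is annihilated. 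These two locally integrable pieces cancel in the sum, so your conclusion stands, but as written the bookkeeping is wrong: the boundedness of the difference term only rules out a point mass at $x=y$; it does not make its $\Box_{x}$ vanish. The excision argument in your last paragraph avoids the splitting altogether and is the version to keep.
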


\begin{proof}
We perform the calculation in detail only for $A$, since the other
one is similar. We use the shorthand $r=|x-y|$.

\begin{eqnarray*}
\nabla_{x}A_{i} & = & \frac{\mu_{0}}{4\pi}\int d^{3}y\,\left(j_{i}\nabla_{x}\frac{1}{r}-\frac{\frac{\partial}{\partial t}j_{i}\nabla_{x}\frac{r}{C}}{r}\right)\\
\Delta A_{i} & = & \nabla_{x}\cdot\nabla_{x}A_{i}\\
 & = & \frac{\mu_{0}}{4\pi}\int d^{3}y\,\left(j_{i}\Delta\frac{1}{r}-2\left(\nabla_{x}\frac{1}{r}\right)\cdot\left(\frac{\partial}{\partial t}j_{i}\nabla_{x}\frac{r}{C}\right)-\frac{\frac{\partial}{\partial t}j_{i}\Delta\frac{r}{C}}{r}+\frac{\frac{\partial^{2}}{\partial t^{2}}j_{i}}{r}|\nabla_{x}\frac{r}{C}|^{2}\right)
\end{eqnarray*}
Moreover, standard vector calculus identities give
\begin{eqnarray*}
\frac{\partial}{\partial t}j_{i}\left(2\nabla\frac{1}{r}\cdot\nabla\frac{r}{C}+\frac{\Delta\frac{r}{C}}{r}\right) & = & 0\\
|\nabla\frac{r}{C}|^{2} & = & \frac{1}{C^{2}}
\end{eqnarray*}
and therefore
\begin{eqnarray*}
\Delta A_{i}(x,t) & = & \frac{\mu_{0}}{4\pi}\int d^{3}y\,j_{i}(y,t-\frac{r}{C})\Delta\left(\frac{1}{r}\right)+\left(\frac{1}{C^{2}}\right)\frac{\mu_{0}}{4\pi}\int d^{3}y\,\frac{\partial^{2}}{\partial t^{2}}\frac{j_{i}(y,t-\frac{r}{C})}{r}
\end{eqnarray*}
The time derivative in the last term can be extracted outside the
integral, thus yielding,
\begin{eqnarray*}
\Box A_{i}(x,t) & = & \Delta A_{i}(x,t)-\left(\frac{1}{C^{2}}\right)\frac{\mu_{0}}{4\pi}\int d^{3}y\,\frac{\partial^{2}}{\partial t^{2}}\frac{j_{i}(y,t-\frac{r}{C})}{r}\\
 & = & \Delta A_{i}(x,t)-\left(\frac{1}{C^{2}}\right)\frac{\partial^{2}}{\partial t^{2}}A_{i}(x,t)\\
 & = & \frac{\mu_{0}}{4\pi}\int d^{3}y\,j_{i}(y,t-\frac{|x-y|}{C})\Delta\left(\frac{1}{r}\right)\\
 & = & -\mu_{0}j_{i}(x,t)
\end{eqnarray*}
\end{proof}
\begin{lem}
\label{lem:GaugeLorenz} The continuity equation along with eq.\ref{eq:Lorenz}
imply ${\displaystyle \nabla\cdot A+{\displaystyle \frac{1}{C^{2}}\frac{\partial V}{\partial t}}=0}$
(the Lorenz gauge).
\end{lem}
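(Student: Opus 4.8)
The plan is to differentiate the retarded integral representation \ref{eq:Lorenz} directly, and to convert the $x$-divergence of $A$ into a $y$-divergence (which integrates to a vanishing surface term) plus a time-derivative term that turns out to be exactly $-\frac{1}{C^{2}}\frac{\partial V}{\partial t}$ once the continuity equation is used. Throughout I would write $r=|x-y|$ and, for a source field $f(y,t)$, abbreviate its retarded value $f(y,t-r/C)$ by $[f]$; the three elementary facts that do all the work are $\frac{\partial}{\partial t}[f]=[\frac{\partial f}{\partial t}]$, $\nabla_x r=-\nabla_y r$, and $\nabla_x\frac{1}{r}=-\nabla_y\frac{1}{r}$.

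First I would compute, under the integral sign, $\nabla_x\cdot\frac{[j]}{r}=\frac{1}{r}\,\nabla_x\cdot[j]+[j]\cdot\nabla_x\frac{1}{r}$, and likewise $\nabla_y\cdot\frac{[j]}{r}=\frac{1}{r}\,\nabla_y\cdot[j]+[j]\cdot\nabla_y\frac{1}{r}$, where in the second identity $\nabla_y\cdot[j]$ denotes the full divergence of the map $y\mapsto j(y,t-r/C)$, i.e. the explicit part plus the part coming through $r$. Adding the two identities and invoking $\nabla_x\frac{1}{r}=-\nabla_y\frac{1}{r}$ cancels the gradient-of-$1/r$ terms, leaving $\nabla_x\cdot\frac{[j]}{r}+\nabla_y\cdot\frac{[j]}{r}=\frac{1}{r}\bigl(\nabla_x\cdot[j]+\nabla_y\cdot[j]\bigr)$. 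By the chain rule $\nabla_x\cdot[j]=-\frac{1}{C}(\nabla_x r)\cdot[\partial_t j]$ while $\nabla_y\cdot[j]=[\nabla_y\cdot j]-\frac{1}{C}(\nabla_y r)\cdot[\partial_t j]$; since $\nabla_x r=-\nabla_y r$ the two retarded-time contributions cancel, so $\nabla_x\cdot[j]+\nabla_y\cdot[j]=[\nabla_y\cdot j]$. The continuity equation then gives $[\nabla_y\cdot j]=-[\partial_t\rho]=-\frac{\partial}{\partial t}[\rho]$, whence $\nabla_x\cdot\frac{[j]}{r}=-\nabla_y\cdot\frac{[j]}{r}-\frac{1}{r}\frac{\partial}{\partial t}[\rho]$.

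Next I would integrate this identity over $y$. The term $\int\nabla_y\cdot\frac{[j]}{r}\,d^{3}y$ is the flux of a vector field through the sphere at infinity and vanishes under the standing decay hypothesis on the sources (the same one already invoked for Maxwell's and Lorentz' manipulations above), so $\nabla_x\cdot A=\frac{\mu_0}{4\pi}\int\nabla_x\cdot\frac{[j]}{r}\,d^{3}y=-\frac{\mu_0}{4\pi}\frac{\partial}{\partial t}\int\frac{[\rho]}{r}\,d^{3}y$. By eq. \ref{eq:Lorenz} the last integral equals $\frac{4\pi}{\mu_0 C^{2}}\,V(x,t)$ (using $\mu_0\epsilon_0 C^{2}=1$), and therefore $\nabla_x\cdot A=-\frac{1}{C^{2}}\frac{\partial V}{\partial t}$, i.e. $\nabla\cdot A+\frac{1}{C^{2}}\frac{\partial V}{\partial t}=0$, as claimed.

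The only delicate points are bookkeeping ones: keeping the two meanings of ``$\nabla_y$ applied to $[j]$'' straight (the explicit gradient versus the gradient through the retardation $r$), and the sign identity $\nabla_x r=-\nabla_y r$ that makes the retarded-time pieces cancel when the $x$- and $y$-divergences are added; and the justification that the surface term at infinity drops, which is exactly where the decay assumption on $j$ enters. No genuine estimate is needed beyond that.
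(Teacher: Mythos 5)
Your proposal is correct and follows essentially the same route as the paper's proof: both differentiate the retarded integrals under the integral sign, use $\nabla_x r=-\nabla_y r$ and $\nabla_x\frac{1}{r}=-\nabla_y\frac{1}{r}$ to trade the $x$-divergence for a total $y$-divergence (discarded via Gauss' theorem and the decay assumption) plus a remainder that the continuity equation turns into $-\frac{1}{C^{2}}\frac{\partial V}{\partial t}$. The only difference is organisational --- you assemble the pointwise identity before integrating, whereas the paper carries out the same manipulations inside the integral --- so the two arguments are substantively identical.
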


\begin{proof}
Still using the shorthand $r=|x-y|$, and applying Gauss theorem over
volume integrals of total divergences of functions vanishing sufficiently
fast at infinity,
\begin{eqnarray*}
0 & = & \nabla\cdot A+\frac{1}{C^{2}}\frac{\partial V}{\partial t}=\\
 & = & \int d^{3}y\left(\nabla_{x}\cdot\frac{j(y,t-\frac{r}{C})}{|x-y|}+{\displaystyle \frac{{\displaystyle \frac{\partial}{\partial t}\rho(y,t-\frac{r}{C})}}{|x-y|}}\right)\\
 & = & \int d^{3}y\left(\nabla_{x}\frac{1}{|x-y|}\cdot j(y,t-\frac{r}{C})-\frac{1}{|x-y|}\left[\frac{\partial}{\partial s}j(y,s)\right]_{s=t-\frac{r}{C}}\cdot\nabla_{x}\frac{r}{C}+{\displaystyle \frac{{\displaystyle \frac{\partial}{\partial t}\rho(y,t-\frac{r}{C})}}{|x-y|}}\right)\\
 & = & \int d^{3}y\left(-\nabla_{y}\frac{1}{|x-y|}\cdot j(y,t-\frac{r}{C})-\frac{1}{|x-y|}\left[\frac{\partial}{\partial s}j(y,s)\right]_{s=t-\frac{r}{C}}\cdot\nabla_{x}\frac{r}{C}+{\displaystyle \frac{{\displaystyle \frac{\partial}{\partial t}\rho(y,t-\frac{r}{C})}}{|x-y|}}\right)\\
 & = & \int d^{3}y\left(\frac{\nabla_{y}\cdot j(y,t-\frac{r}{C})}{|x-y|}-\frac{1}{|x-y|}\left[\frac{\partial}{\partial s}j(y,s)\right]_{s=t-\frac{r}{C}}\cdot\nabla_{x}\frac{r}{C}+{\displaystyle \frac{{\displaystyle \frac{\partial}{\partial t}\rho(y,t-\frac{r}{C})}}{|x-y|}}\right)\\
 & = & \int d^{3}y\left(\frac{\nabla_{y}\cdot\left[j(y,s)\right]_{s=t-\frac{r}{C}}}{|x-y|}-\frac{1}{|x-y|}\left[\frac{\partial}{\partial s}j(y,s)\right]_{s=t-\frac{r}{C}}\cdot\left(\nabla_{y}+\nabla_{x}\right)\frac{r}{C}+{\displaystyle \frac{{\displaystyle \frac{\partial}{\partial t}\rho(y,t-\frac{r}{C})}}{|x-y|}}\right)\\
 & = & \int d^{3}y\left(\frac{\left[\frac{\partial}{\partial s}\rho(y,s)+\nabla_{y}\cdot j(y,s)\right]_{s=t-\frac{r}{C}}}{|x-y|}\right)
\end{eqnarray*}
since $\left(\nabla_{y}+\nabla_{x}\right)|x-y|=0$.
\end{proof}
\begin{lem}
\label{lem:eq4-no-ether} Lemma \ref{lem:GaugeLorenz}, together with
eqs. \ref{eq:magneticfield}, \ref{eq:electricfield} and \ref{eq:Lorenz}
imply eq. \ref{eq:displacement}.
\end{lem}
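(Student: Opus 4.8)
The plan is to run the classical vector-potential argument backwards, but purged of the ether: everything will be read off from the representation \eqref{eq:Lorenz}, the identification $B=\nabla\times A$, $E=-\partial A/\partial t-\nabla V$, and the Lorenz gauge established in Lemma~\ref{lem:GaugeLorenz}. First I would start from the curl of the magnetic field and use the standard identity
\[
\nabla\times B=\nabla\times(\nabla\times A)=\nabla(\nabla\cdot A)-\Delta A,
\]
valid because $A$ is twice differentiable (which the retarded integral guarantees away from the singularity, with the differentiations justified exactly as in the unnamed wave-equation lemma preceding Lemma~\ref{lem:GaugeLorenz}).

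Next I would substitute the two inputs separately. For the first term, Lemma~\ref{lem:GaugeLorenz} gives $\nabla\cdot A=-\tfrac{1}{C^{2}}\tfrac{\partial V}{\partial t}$, hence $\nabla(\nabla\cdot A)=-\tfrac{1}{C^{2}}\tfrac{\partial}{\partial t}\nabla V$. For the second term, the wave equation $\Box A=-\mu_{0}j$ that follows from \eqref{eq:Lorenz} (the first lemma of the appendix) can be rewritten as $\Delta A=-\mu_{0}j+\tfrac{1}{C^{2}}\tfrac{\partial^{2}A}{\partial t^{2}}$. Putting these together,
\[
\nabla\times B=-\frac{1}{C^{2}}\frac{\partial}{\partial t}\nabla V+\mu_{0}j-\frac{1}{C^{2}}\frac{\partial^{2}A}{\partial t^{2}}
=\mu_{0}j+\frac{1}{C^{2}}\frac{\partial}{\partial t}\!\left(-\nabla V-\frac{\partial A}{\partial t}\right).
\]

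Finally I would recognize the bracket on the right as $E$ by \eqref{eq:electricfield}, so that the last term is $\tfrac{1}{C^{2}}\partial E/\partial t$, which is precisely \eqref{eq:displacement}. The argument is essentially bookkeeping; the only genuine subtleties are (i) legitimating the term-by-term spatial and temporal differentiation of the retarded integral, which is the content of the preceding lemma and is reused verbatim here, and (ii) the fact that the Lorenz gauge is not an extra postulate but a consequence of the continuity equation plus \eqref{eq:Lorenz}, already secured in Lemma~\ref{lem:GaugeLorenz}. No step should present a real obstacle once those two facts are in hand, so I expect the proof to be short.
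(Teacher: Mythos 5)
Your proposal is correct and is essentially the paper's own proof: both rest on the identity $\nabla\times(\nabla\times A)=\nabla(\nabla\cdot A)-\Delta A$, the wave equation $\Box A=-\mu_{0}j$ from eq.~\eqref{eq:Lorenz}, the Lorenz gauge from Lemma~\ref{lem:GaugeLorenz}, and the identification of $-\nabla V-\partial A/\partial t$ with $E$. The only difference is that you solve for $\nabla\times B$ while the paper solves for $\mu_{0}j$, which is a trivial rearrangement.
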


\begin{proof}
The proof requires standard applications of vector calculus. From
eq. \ref{eq:Lorenz} we derive
\begin{eqnarray*}
\mu_{0}j & = & -\left(\nabla\left(\nabla\cdot A\right)-\nabla\times\left(\nabla\times A\right)-\frac{1}{C^{2}}\frac{\partial^{2}A}{\partial t^{2}}\right)\\
 & = & \frac{1}{C^{2}}\left(\nabla\frac{\partial V}{\partial t}+\frac{\partial^{2}A}{\partial t^{2}}\right)+\nabla\times B\\
 & = & -\frac{1}{C^{2}}\frac{\partial E}{\partial t}+\nabla\times B
\end{eqnarray*}
\end{proof}
\begin{lem}
\label{lem:energy-final} Eq. \ref{eq:power} is equivalent to eq.
\ref{eq:energy} (the total electromagnetic energy), up to the volume
integral of the gradient of a function that vanishes at infinity.
\end{lem}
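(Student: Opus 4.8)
The plan is to carry out the standard Poynting-type rearrangement, but using only the definitions in eqs.~\ref{eq:magneticfield}--\ref{eq:electricfield} and eq.~\ref{eq:displacement} — never the ether — so that the integrand $-E\cdot j$ is exhibited as a time derivative of the energy density plus a pure divergence. First I would substitute eq.~\ref{eq:displacement}, written as $\mu_{0}j=\nabla\times B-\frac{1}{C^{2}}\frac{\partial E}{\partial t}$, into ${\cal P}=-E\cdot j$, and use $\epsilon_{0}=(\mu_{0}C^{2})^{-1}$ to obtain
\[
-E\cdot j=-\frac{1}{\mu_{0}}E\cdot(\nabla\times B)+\epsilon_{0}\,E\cdot\frac{\partial E}{\partial t}.
\]
The second term is already $\frac{\epsilon_{0}}{2}\frac{\partial}{\partial t}|E|^{2}$. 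For the first term I would invoke the identity $\nabla\cdot(E\times B)=B\cdot(\nabla\times E)-E\cdot(\nabla\times B)$ together with $\nabla\times E=-\frac{\partial B}{\partial t}$ (an immediate consequence of $B=\nabla\times A$ and $E=-\frac{\partial A}{\partial t}-\nabla V$), which turns it into $\frac{1}{\mu_{0}}\nabla\cdot(E\times B)+\frac{1}{2\mu_{0}}\frac{\partial}{\partial t}|B|^{2}$. Collecting terms gives the local balance
\[
-E\cdot j=\frac{\partial}{\partial t}\left(\frac{1}{2\mu_{0}}|B|^{2}+\frac{\epsilon_{0}}{2}|E|^{2}\right)+\frac{1}{\mu_{0}}\nabla\cdot(E\times B).
\]

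Next I would integrate over $d^{3}x$ and then over $t$ from $0$ to $t$. Because the electromagnetic state in eq.~\ref{eq:power} is built up starting from ${\cal E}(0)=0$ — i.e. all fields vanish at the initial time — the time integral of the first term evaluates to exactly $\frac{1}{2}\int d^{3}x\bigl(\frac{1}{\mu_{0}}|B|^{2}+\epsilon_{0}|E|^{2}\bigr)$, which is eq.~\ref{eq:energy}. What is left over is $\frac{1}{\mu_{0}}\int_{0}^{t}dt\int d^{3}x\,\nabla\cdot(E\times B)$, a space-time integral of a total divergence — the ``volume integral of the gradient of a function vanishing at infinity'' referred to in the statement. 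Under the stated decay hypothesis (both $|E|^{2}$ and $|B|^{2}$ decay faster than $1/r^{2}$, hence $|E\times B|$ decays faster than $1/r^{2}$), Gauss' theorem converts this term into a surface integral at infinity that vanishes, which yields the asserted equivalence of eq.~\ref{eq:power} and eq.~\ref{eq:energy}.

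There is no genuine obstacle here; the computation is routine. The only places that need care are the bookkeeping of the constant $\epsilon_{0}=(\mu_{0}C^{2})^{-1}$, which is precisely what makes the displacement-current contribution reassemble into $\frac{\partial}{\partial t}|E|^{2}$, and checking that the decay hypothesis is stated strongly enough to annihilate the Poynting flux at infinity — which it is, precisely because it is imposed on $|E|^{2}$ and $|B|^{2}$ rather than on $|E|$ and $|B|$. One could equally reach eq.~\ref{eq:energy} by integrating $(\nabla V+\frac{\partial A}{\partial t})\cdot j$ by parts in $V$ and $A$ and using the continuity equation together with eq.~\ref{eq:charges}; that route produces $A\cdot j$- and $V\rho$-type surface terms instead, which also vanish since the sources are supported on a bounded region.
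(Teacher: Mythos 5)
Your proof is correct, and it takes a genuinely different (and shorter) route than the paper's. You first assemble $\nabla V+\frac{\partial A}{\partial t}$ into $-E$, substitute eq.~\ref{eq:displacement} for $j$, and then invoke the Poynting identity $\nabla\cdot(E\times B)=B\cdot(\nabla\times E)-E\cdot(\nabla\times B)$ together with $\nabla\times E=-\frac{\partial B}{\partial t}$ (free from the potential definitions), so the only discarded flux is that of $\frac{1}{\mu_{0}}E\times B$. The paper instead keeps $\nabla V\cdot j$ and $\frac{\partial A}{\partial t}\cdot j$ separate: it disposes of the first via the continuity equation and eq.~\ref{eq:charges} (writing $-V\nabla\cdot j=V\frac{\partial\rho}{\partial t}=\epsilon_{0}V\,\nabla\cdot\frac{\partial E}{\partial t}$ and integrating by parts once more), and of the second via eq.~\ref{eq:displacement} and the curl identity applied to $\frac{\partial A}{\partial t}\times B$; its discarded divergences are those of $Vj$, $V\frac{\partial E}{\partial t}$ and $\frac{\partial A}{\partial t}\times B$ rather than of $E\times B$. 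What your version buys is economy of hypotheses --- you never need the continuity equation or eq.~\ref{eq:charges}, only eqs.~\ref{eq:magneticfield}, \ref{eq:electricfield} and \ref{eq:displacement} plus the decay condition --- and you make explicit the use of ${\cal E}(0)=0$ when evaluating the time integral, which the paper leaves implicit in its last two lines. What the paper's version buys is consistency with its own programme of keeping the potentials $A,V$ as the primary objects throughout the manipulation. Your observation that the decay hypothesis on $|E|^{2}$ and $|B|^{2}$ controls $|E\times B|$ is the right check; note only that the surviving boundary term is the divergence of a vector field (the Poynting flux), which is what the lemma's phrase ``gradient of a function vanishing at infinity'' is loosely standing for in the paper as well.
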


\begin{proof}
Under the general assumption that $\int\nabla\cdot F(x,t)\,d^{3}x$
vanishes at infinity, being $F$ a vector function that decays sufficiently
fast (i.e., faster than $r^{-2}$), we obtain
\begin{eqnarray*}
 &  & \int_{0}^{t}dt\int d^{3}x\left(\left(\nabla V+\frac{\partial A}{\partial t}\right)\cdot j\right)=\\
 & = & \int_{0}^{t}dt\int d^{3}x\left(\nabla\cdot(Vj)-V\nabla\cdot j+\frac{\partial A}{\partial t}\cdot\left(-\epsilon_{0}\frac{\partial E}{\partial t}+\frac{1}{\mu_{0}}\nabla\times B\right)\right)\\
 & = & \int_{0}^{t}dt\int d^{3}x\left(V\frac{\partial\rho}{\partial t}+\frac{\partial A}{\partial t}\cdot\left(-\epsilon_{0}\frac{\partial E}{\partial t}+\frac{1}{\mu_{0}}\nabla\times B\right)\right)\\
 & = & \int_{0}^{t}dt\int d^{3}x\left(\epsilon_{0}V\nabla\cdot\frac{\partial E}{\partial t}-\epsilon_{0}\frac{\partial A}{\partial t}\cdot\frac{\partial E}{\partial t}+\frac{1}{\mu_{0}}\frac{\partial A}{\partial t}\cdot\nabla\times B\right)\\
 & = & \int_{0}^{t}dt\int d^{3}x\left(\epsilon_{0}\nabla\cdot(V\frac{\partial E}{\partial t})-\epsilon_{0}\nabla V\cdot\frac{\partial E}{\partial t}-\epsilon_{0}\frac{\partial A}{\partial t}\cdot\frac{\partial E}{\partial t}+\frac{1}{\mu_{0}}\left(\frac{\partial B}{\partial t}\cdot B\right)-\frac{1}{\mu_{0}}\nabla\cdot\left(\frac{\partial A}{\partial t}\times B\right)\right)\\
 & = & \int_{0}^{t}dt\int d^{3}x\left(\epsilon_{0}E\cdot\frac{\partial E}{\partial t}+\frac{1}{\mu_{0}}\left(\frac{\partial B}{\partial t}\cdot B\right)\right)\\
 & = & \frac{1}{2}\int_{0}^{t}dt\frac{\partial}{\partial t}\int d^{3}x\left(\epsilon_{0}|E|^{2}+\frac{1}{\mu_{0}}|B|^{2}\right)\\
 & = & \frac{1}{2}\int d^{3}x\left(\epsilon_{0}|E|^{2}+\frac{1}{\mu_{0}}|B|^{2}\right)
\end{eqnarray*}
\end{proof}
where all integrals involving total divergences have been set to zero
by Gauss' theorem. We have also used the continuity equation.
\begin{lem}
\label{lem:action-final}Up to an overall function of time and the
divergence of a function vanishing sufficiently fast at infinity,
the electromagnetic action satisfies
\begin{eqnarray*}
{\cal {\cal A}} & = & \frac{1}{2}\int dt\int\left(\frac{1}{\mu_{0}}|B|^{2}-\epsilon_{0}|E|^{2}\right)\,d^{3}x\\
 & = & \frac{1}{2}\int dt\,\int\left(A\cdot j-\rho V\right)\,d^{3}x
\end{eqnarray*}
\end{lem}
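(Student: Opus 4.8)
The plan is to repeat the route of Lemma~\ref{lem:energy-final}, now tracking the reversed sign of the electric term. First I would substitute the shorthands $B=\nabla\times A$ and $E=-\partial A/\partial t-\nabla V$ together with the source equations~\ref{eq:charges} and~\ref{eq:displacement}, so as to rewrite the Lagrangian density $\tfrac{1}{\mu_{0}}|B|^{2}-\epsilon_{0}|E|^{2}$ in terms of $A\cdot j$, $\rho V$, total spatial divergences, and a total time derivative.

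For the magnetic part I would use $\nabla\cdot(A\times B)=B\cdot(\nabla\times A)-A\cdot(\nabla\times B)$ together with eq.~\ref{eq:displacement} and $\tfrac{1}{\mu_{0}C^{2}}=\epsilon_{0}$ to obtain
\[
\frac{1}{\mu_{0}}|B|^{2}=\frac{1}{\mu_{0}}\nabla\cdot(A\times B)+A\cdot j+\epsilon_{0}\,A\cdot\frac{\partial E}{\partial t}.
\]
For the electric part, writing $-\epsilon_{0}|E|^{2}=\epsilon_{0}\,E\cdot\bigl(\tfrac{\partial A}{\partial t}+\nabla V\bigr)$ and using $\nabla\cdot(VE)=E\cdot\nabla V+V\,\nabla\cdot E$ with eq.~\ref{eq:charges} gives
\[
-\epsilon_{0}|E|^{2}=\epsilon_{0}\,E\cdot\frac{\partial A}{\partial t}+\epsilon_{0}\,\nabla\cdot(VE)-\rho V.
\]
Adding the two displays, the remaining pair of cross terms assembles into the perfect time derivative $\epsilon_{0}\,\partial_{t}(A\cdot E)$, leaving
\[
\frac{1}{\mu_{0}}|B|^{2}-\epsilon_{0}|E|^{2}=A\cdot j-\rho V+\epsilon_{0}\frac{\partial}{\partial t}(A\cdot E)+\frac{1}{\mu_{0}}\nabla\cdot(A\times B)+\epsilon_{0}\,\nabla\cdot(VE).
\]

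Next I would integrate over $d^{3}x$: under the standing decay hypotheses (all relevant quantities falling off faster than $r^{-2}$) Gauss' theorem annihilates the two divergence terms, exactly as in Lemma~\ref{lem:energy-final}, and what survives is $\int(A\cdot j-\rho V)\,d^{3}x$ plus $\epsilon_{0}\,\partial_{t}\!\int A\cdot E\,d^{3}x$. Finally, integrating over $dt$, the last term contributes $\epsilon_{0}\bigl[\int A\cdot E\,d^{3}x\bigr]_{t_{0}}^{t}$, a function of time alone --- precisely the overall function of time permitted in the statement, equal to the $f(t)-f(t_{0})$ of eq.~\ref{eq:action}. Multiplying through by $\tfrac12$ yields the asserted identity. (Note that, unlike in Lemma~\ref{lem:energy-final}, the continuity equation is not needed for this computation.)

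I do not anticipate any real difficulty: the argument is pure bookkeeping of integration by parts. The two points deserving a little care are keeping the constant $\tfrac{1}{\mu_{0}C^{2}}=\epsilon_{0}$ straight when inserting eq.~\ref{eq:displacement}, and noticing that the leftover electric--magnetic cross terms combine into a single time derivative, so that they contribute only the admissible overall function of time rather than a genuine dependence on the sources. Everything else is verbatim the argument of Lemma~\ref{lem:energy-final} with the relative sign of $\epsilon_{0}|E|^{2}$ flipped.
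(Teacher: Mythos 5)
Your proposal is correct and follows essentially the same route as the paper's proof: the identical divergence identities $\nabla\cdot(A\times B)=B\cdot(\nabla\times A)-A\cdot(\nabla\times B)$ and $\nabla\cdot(VE)=E\cdot\nabla V+V\nabla\cdot E$ combined with eqs.~\ref{eq:charges} and \ref{eq:displacement}, the same assembly of the cross terms into $\epsilon_{0}\partial_{t}(A\cdot E)$, and the same disposal of the boundary terms --- the only difference being that you expand $\tfrac{1}{\mu_{0}}|B|^{2}-\epsilon_{0}|E|^{2}$ toward $A\cdot j-\rho V$ while the paper runs the identity in the opposite direction. Your side remark that the continuity equation is not needed here (unlike in Lemma~\ref{lem:energy-final}) is also accurate.
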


\begin{proof}
The proof requires standard vector calculus operations on the integrand,
namely 
\begin{eqnarray*}
A\cdot j-\rho V & = & A\cdot\left(\frac{1}{\mu_{0}}\nabla\times B-\epsilon_{0}\frac{\partial E}{\partial t}\right)-\epsilon_{0}V\nabla\cdot E\\
 & = & \frac{1}{\mu_{0}}\left(|B|^{2}-\nabla\cdot\left(A\times B\right)\right)-\epsilon_{0}A\cdot\frac{\partial E}{\partial t}-\epsilon_{0}\nabla\cdot(VE)+\epsilon_{0}E\cdot\nabla V\\
 & = & \frac{1}{\mu_{0}}\left(|B|^{2}-\nabla\cdot\left(A\times B\right)\right)-\epsilon_{0}A\cdot\frac{\partial E}{\partial t}-\epsilon_{0}\nabla\cdot(VE)+\epsilon_{0}E\cdot\left(-E-\frac{\partial A}{\partial t}\right)\\
 & = & \frac{1}{\mu_{0}}|B|^{2}-\epsilon_{0}|E|^{2}-\nabla\cdot\left(\frac{1}{\mu_{0}}A\times B+\epsilon_{0}VE\right)-\epsilon_{0}\frac{\partial}{\partial t}\left(A\cdot E\right)
\end{eqnarray*}
\end{proof}
\begin{lem}
\label{lem:gauge-independent} The result of Lemma \ref{lem:action-final}
is independent of the choice of gauge.
\end{lem}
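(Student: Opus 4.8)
The plan is to use the fact that the left-hand side of Lemma~\ref{lem:action-final}, the purely field-theoretic expression $\frac{1}{2}\int dt\int\bigl(\frac{1}{\mu_{0}}|B|^{2}-\epsilon_{0}|E|^{2}\bigr)\,d^{3}x$, depends only on $E$ and $B$, and these are left unchanged by a gauge transformation $A\mapsto A+\nabla\chi$, $V\mapsto V-\partial_{t}\chi$ (indeed $\nabla\times\nabla\chi=0$ and $-\partial_{t}(A+\nabla\chi)-\nabla(V-\partial_{t}\chi)=-\partial_{t}A-\nabla V$). So that side of the identity is \emph{manifestly} gauge invariant, and the whole content of the lemma is to check that the right-hand side $\frac{1}{2}\int dt\int(A\cdot j-\rho V)\,d^{3}x$ changes, under such a transformation, only by the kind of term already set aside in the statement of Lemma~\ref{lem:action-final}: an overall function of time, plus the volume integral of a divergence vanishing at infinity. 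A first, almost immediate observation to record is that the chain of identities proving Lemma~\ref{lem:action-final} used only Maxwell's equations \ref{eq:magneticfield}--\ref{eq:displacement} and the continuity equation, and \emph{never} invoked the Lorenz gauge condition of Lemma~\ref{lem:GaugeLorenz}; hence the equality already holds for every pair $(A,V)$ representing a prescribed $(E,B)$, and the lemma is just the quantitative form of this remark.

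To make the residual ambiguity explicit I would substitute $A\mapsto A+\nabla\chi$, $V\mapsto V-\partial_{t}\chi$ directly into the integrand $A\cdot j-\rho V$, which picks up the extra term $\nabla\chi\cdot j+\rho\,\partial_{t}\chi$. I integrate the first piece by parts in space, $\int d^{3}x\,\nabla\chi\cdot j=\int d^{3}x\,\nabla\cdot(\chi j)-\int d^{3}x\,\chi\,\nabla\cdot j$, and drop the divergence by Gauss' theorem under the decay assumptions used throughout the paper. The continuity equation $\nabla\cdot j=-\partial_{t}\rho$ then turns $-\chi\,\nabla\cdot j$ into $\chi\,\partial_{t}\rho$, so the integrand of the variation collapses to $\chi\,\partial_{t}\rho+\rho\,\partial_{t}\chi=\partial_{t}(\chi\rho)$, a total time derivative. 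Performing the $\int dt$ leaves only $\int d^{3}x\,\chi(x,t)\rho(x,t)-\int d^{3}x\,\chi(x,t_{0})\rho(x,t_{0})$, a function of the time endpoint alone --- precisely an ``overall function of time'' in the sense admitted by Lemma~\ref{lem:action-final}. This shows $\frac{1}{2}\int dt\int(A\cdot j-\rho V)\,d^{3}x$, and therefore the full identity of Lemma~\ref{lem:action-final}, is independent of the choice of gauge.

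There is no genuinely hard step here; the computation is routine and the only points that need care are bookkeeping ones. One must confirm that the spatial surface term really vanishes, which requires $\chi j$ (not merely $j$) to decay faster than $r^{-2}$ --- an innocuous strengthening of the standing hypotheses, or equivalently a restriction to bounded gauge functions $\chi$; and one must verify that the leftover boundary-in-time contribution genuinely lies in the class of ambiguities permitted by Lemma~\ref{lem:action-final}, which it does, being annihilated by the variational derivative and hence irrelevant to the dynamics and to Hamilton's principle as used in the deduction of the force. It is worth closing with the observation that this confirms the Lorenz gauge of Lemma~\ref{lem:GaugeLorenz} was a computational convenience and not a structural assumption: the action functional from which Lorentz' force is extracted is an intrinsically gauge-independent object.
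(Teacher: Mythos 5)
Your proposal is correct and follows essentially the same route as the paper: substitute the gauge-transformed potentials into $A\cdot j-\rho V$, use the continuity equation to recognise the extra terms as a total divergence $\nabla\cdot(\Lambda j)$ plus a total time derivative $\partial_{t}(\rho\Lambda)$, and note that both lie in the class of ambiguities already set aside in Lemma~\ref{lem:action-final}. Your additional remarks (manifest invariance of the $|B|^{2}-|E|^{2}$ side and the decay requirement on $\chi j$) are sound bookkeeping but do not change the argument.
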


\begin{proof}
Modifying the potentials with a sufficiently smooth function $\Lambda(x,t)$
also vanishing adequately at infinity and such that $A^{\prime}=A+\nabla\Lambda$
and $V^{\prime}=V-{\displaystyle \frac{\partial\Lambda}{\partial t}}$,
we obtain
\begin{eqnarray*}
A^{\prime}\cdot j-\rho V^{\prime} & = & A\cdot j+(\nabla\Lambda\cdot j)-\rho(V-\frac{\partial\Lambda}{\partial t})\\
 & = & A\cdot j-\rho V+\left(\nabla\Lambda\cdot j+\rho\frac{\partial\Lambda}{\partial t}\right)\\
 & = & A\cdot j-\rho V+\left(\nabla\cdot(\Lambda j)-\Lambda\nabla\cdot j+\rho\frac{\partial\Lambda}{\partial t}\right)\\
 & = & A\cdot j-\rho V+\nabla\cdot(\Lambda j)-\frac{\partial}{\partial t}(\rho\Lambda)
\end{eqnarray*}
\end{proof}

\subsubsection{Proof of Theorem \ref{thm:TheForce}:\label{subsec:TheProof}}
\begin{proof}
For the sake of convenience, we may express the action using Lemma
\ref{lem:action-final} before proceeding:

\begin{equation}
\delta_{\bar{x}(t)}{\cal A}=\delta_{\bar{x}(t)}\int dt\,\int d^{3}x\,\left[A\cdot j-V\rho\right]=\int dt\,\int d^{3}x\,[\delta A\cdot j+A\cdot\delta j-\delta V\rho-V\delta\rho]=0\label{eq:variation}
\end{equation}
Potentials charge and current can be split in primary and secondary
circuits, namely $A=A_{1}+A_{2}$ (and similarly for $V,j,\rho)$
and the variation can be formulated as moving the secondary circuit
with respect to a fixed primary circuit. Only the interacting terms
actually vary, so the integrand reads $\left(\delta A_{2}\cdot j_{1}-\delta V_{2}\rho_{1}\right)+\left(A_{1}\cdot\delta j_{2}-V_{1}\delta\rho_{2}\right)$.
We transform the first parenthesis using Maxwell equations, Lemma
\ref{lem:wave}: 
\begin{eqnarray*}
\int dt\int d^{3}x\,\left[\delta A_{2}\cdot j_{1}-\delta V_{2}\rho_{1}\right] & = & \int dt\int d^{3}x\,\left[-\mu_{0}\delta A_{2}\cdot\square A_{1}+\epsilon_{0}\delta V_{2}\square V_{1}\right]
\end{eqnarray*}
The particular variation reflecting the problem (moving subsystem
2 with respect to a fixed subsystem 1) and partial integrations in
space and time allow to transform
\[
\int dt\int d^{3}x\,\left[-\mu_{0}\delta A_{2}\cdot\square A_{1}+\epsilon_{0}\delta V_{2}\square V_{1}\right]=\int dt\int d^{3}x\,\left[-\mu_{0}\delta A_{1}\cdot\square A_{2}+\epsilon_{0}\delta V_{1}\square V_{2}\right]
\]
up to an overall divergence whose contribution vanishes by Gauss theorem
and an overall function of time that do not contribute to the variation.
Hence, the required variation reads (dropping the index $\bar{x}(t)$),
\[
\delta{\cal A}=\int dt\int d^{3}x\,\left[A_{1}\cdot\delta j_{2}-V_{1}\delta\rho_{2}\right]
\]

The variation of the current and of the charge distribution due to
the motion of the probe relative to the primary circuit are, after
eq. \ref{eq:Current},

\begin{eqnarray*}
\delta\bar{j}_{2} & = & -(\delta\bar{x}\cdot\nabla)\bar{j}_{2}+\bar{\rho}_{2}\delta\dot{\bar{x}}\\
\delta\bar{\rho}_{2} & = & -(\delta\bar{x}\cdot\nabla)\bar{\rho}_{2}
\end{eqnarray*}
We have then 
\begin{eqnarray}
\delta{\cal A} & = & \delta\int dt\,{\cal L}=\int dt\int d^{3}x\,\left[A_{1}\cdot\delta\bar{j}_{2}-V_{1}\delta\bar{\rho}_{2}\right]\nonumber \\
 & = & \int dt\,\int d^{3}x\,\left[A_{1}\cdot\left(-(\delta\bar{x}\cdot\nabla)\bar{j}_{2}+\bar{\rho}_{2}\delta\dot{\bar{x}}\right)-V_{1}\left(-(\delta\bar{x}\cdot\nabla)\bar{\rho}_{2}\right)\right]\label{eq:variada}\\
 & = & \int dt\,\int d^{3}x\,\left[\bar{j}_{2}\cdot\left(\delta\bar{x}\cdot\nabla\right)A_{1}-\bar{\rho}_{2}\left(\delta\bar{x}\cdot\nabla\right)V_{1}-\delta\bar{x}\cdot\frac{\partial}{\partial t}\left(A_{1}\bar{\rho}_{2}\right)\right]\nonumber 
\end{eqnarray}
The last line is obtained after some integrations by parts and following
the cancellation of a whole divergence via Gauss theorem and of an
overall function of time via the variational constraints. In particular:
\begin{eqnarray*}
\left(\left(\delta\bar{x}\cdot\nabla\right)\bar{j}_{2}\right)\cdot A_{1} & = & \left(\delta\bar{x}\cdot\nabla\right)\left(\bar{j}_{2}\cdot A_{1}\right)-\bar{j}_{2}\cdot\left(\delta\bar{x}\cdot\nabla\right)A_{1}\\
 & = & \nabla\cdot\left(\delta\bar{x}\left(\bar{j}_{2}\cdot A_{1}\right)\right)-\bar{j}_{2}\cdot\left(\delta\bar{x}\cdot\nabla\right)A_{1}\\
V_{1}(\delta\bar{x}\cdot\nabla)\bar{\rho}_{2} & = & \delta\bar{x}\cdot\nabla\left(V_{1}\bar{\rho}_{2}\right)-\bar{\rho}_{2}\left(\delta\bar{x}\cdot\nabla\right)V_{1}\\
 & = & \nabla\cdot\left(\delta\bar{x}V_{1}\bar{\rho}_{2}\right)-\bar{\rho}_{2}\left(\delta\bar{x}\cdot\nabla\right)V_{1}\\
A_{1}\cdot\left(\bar{\rho}_{2}\delta\dot{\bar{x}}\right) & = & \frac{\partial}{\partial t}\left(\rho_{2}A_{1}\cdot\delta\bar{x}\right)-\delta\bar{x}\cdot\frac{\partial}{\partial t}\left(A_{1}\bar{\rho}_{2}\right)
\end{eqnarray*}
Further transformation with mathematical identities allows us to write
\begin{eqnarray*}
\bar{j}_{2}\cdot\left(\delta\bar{x}\cdot\nabla\right)A_{1}-\delta\bar{x}\cdot A_{1}\frac{\partial}{\partial t}\bar{\rho}_{2} & = & \bar{j}_{2}\cdot\left(\delta\bar{x}\cdot\nabla\right)A_{1}+\left(\delta\bar{x}\cdot A_{1}\right)\left(\nabla\cdot\bar{j}_{2}\right)\\
 & = & \bar{j}_{2}\cdot\left(\delta\bar{x}\cdot\nabla\right)A_{1}+\nabla\cdot\left(\bar{j}_{2}\left(\delta\bar{x}\cdot A_{1}\right)\right)-\bar{j}_{2}\cdot\nabla\left(\delta\bar{x}\cdot A_{1}\right)\\
 & = & \nabla\cdot\left(\bar{j}_{2}\left(\delta\bar{x}\cdot A_{1}\right)\right)-\bar{j}_{2}\cdot\delta\bar{x}\times\left(\nabla\times A_{1}\right)\\
 & = & \nabla\cdot\left(\bar{j}_{2}\left(\delta\bar{x}\cdot A_{1}\right)\right)+\delta\bar{x}\cdot\bar{j}_{2}\times\left(\nabla\times A_{1}\right)
\end{eqnarray*}
and finally, after applying Gauss theorem again,

\[
\int dt\int d^{3}x\,\delta\bar{x}\cdot\left[\bar{j}_{2}\times B_{1}+\bar{\rho}_{2}E_{1}\right].
\]
This is, following the standard use of Hamilton's principle in mechanics
we arrive to an electromagnetic contribution to the force on the probe
\[
F_{em}=\bar{j}_{2}\times B_{1}+\bar{\rho}_{2}E_{1}
\]
\end{proof}

\end{document}